\pdfoutput=1
\documentclass[runningheads]{llncs}

\usepackage{tikz}
\usetikzlibrary{shapes, arrows, positioning, fit, backgrounds, calc}

\usepackage{amsmath,amssymb,amsfonts}
\usepackage{mathtools}
\usepackage{bbm}
\usepackage{enumitem}
\usepackage{booktabs}
\usepackage{graphicx}
\usepackage{subcaption}

\newcommand{\F}{\mathbb{F}}
\newcommand{\R}{\mathbb{R}}

\newcommand{\poly}{\mathsf{poly}}
\newcommand{\negl}{\mathsf{negl}}
\newcommand{\Com}{\mathsf{Com}}
\newcommand{\SMDP}{\mathsf{SMDP}}

\begin{document}

\title{Fine-Tuning Integrity: Verifiable Constraints on Model Updates with Zero-Knowledge Proofs}
\titlerunning{Fine-Tuning Integrity}

\author{Zhenhang Shang \and Yingzhe Yu \and Kani Chen}
\authorrunning{Z. Shang et al.}

\institute{The Hong Kong University of Science and Technology, Hong Kong, China\\
\email{zshangab@connect.ust.hk}}

\maketitle

\begin{abstract}
Fine-tuning is the dominant paradigm for adapting large machine learning models, yet current deployment pipelines provide no way to verify how a released model was updated. In particular, a model provider or auditor cannot check whether a fine-tuned model adheres to a claimed update procedure without access to its parameters.

We introduce \emph{fine-tuning integrity} (FTI), a cryptographic objective for verifying that a deployed model differs from a trusted base model only within a declared class of admissible updates. We construct \emph{succinct model difference proofs} (SMDPs), zero-knowledge protocols that certify structured parameter drift without revealing model weights. Our framework supports three fundamental update classes: norm-bounded, low-rank, and sparse drift, covering common fine-tuning methods such as regularized training, LoRA, and prefix tuning. In all cases, proof size and verification cost depend on the structure of the update rather than the number of parameters.

We prove soundness, zero-knowledge, and succinctness for each construction, and establish a matching $\Omega(n)$ lower bound showing that structural assumptions are necessary for succinct verification. A prototype evaluation on synthetic benchmarks and GPT-2 fine-tuning demonstrates that proofs remain compact and verification is efficient at realistic scales.
\end{abstract}

\keywords{
Zero-knowledge proofs,
Machine learning security,
Model verification,
Fine-tuning,
Succinct proofs
}

\section{Introduction}
Fine-tuning is the standard approach for adapting large neural networks to new tasks and domains~\cite{anisuzzaman2025fine,han2024parameter}. Starting from a pre-trained base model, practitioners apply gradient updates on datasets that are often proprietary~\cite{xin2024parameter}. While this workflow reduces training cost and enables rapid deployment, it also introduces a critical attack surface. An adversary who controls the fine-tuning process can inject backdoors~\cite{raghuram2024study}, remove safety alignment~\cite{jain2024makes}, or introduce targeted poisoning~\cite{zhao2024defending}. Such manipulations are difficult to detect once only the final model is released. The risk is particularly acute in regulated settings, where auditors must ensure that deployed models remain within an approved and certified update scope. In the absence of such guarantees, model deployment reduces to a trust-based process with no verifiable security boundary.

We formalize \emph{fine-tuning integrity} (FTI) as the problem of verifying that a deployed model differs from a trusted base model only within a policy-defined class of admissible updates, without revealing the underlying parameters. An FTI proof system must satisfy three requirements: (1)~It must be \emph{zero-knowledge}, so that no information about model parameters is leaked beyond compliance. (2)~It must be \emph{succinct}, so that proof size depends on the structure of the update rather than the number of parameters. And, (3)~it must be \emph{sound}, so that any violation of the declared policy is rejected with overwhelming probability.

A direct solution encodes the verification task as an arithmetic circuit and applies a general-purpose SNARK~\cite{lavin2024survey}, leading to cost linear in the number of parameters. This approach is impractical at modern model scales, we instead leverage the structure inherent in practical fine-tuning. Common methods impose implicit constraints on parameter updates, including norm bounds induced by regularization, low-rank structure from LoRA-style adapters~\cite{hu2022lora}, and sparsity from prompt-based tuning~\cite{xu2023parameter}.

We introduce \emph{Succinct Model Difference Proofs} (SMDPs), a unified framework of zero-knowledge protocols for verifying structured parameter drift. We instantiate this framework for three fundamental update classes:
\begin{enumerate}[leftmargin=*]
  \item \textbf{NBDP} (norm-bounded), which uses random projections with range proofs and achieves dimension-independent verification cost.
  \item \textbf{MRDP} (low-rank), which leverages bivariate polynomial commitments and Schwartz--Zippel testing, with cost scaling in the rank.
  \item \textbf{SDIP} (sparse), which employs random linear checks with committed indicator vectors, with cost scaling in the sparsity.
\end{enumerate}

For each construction, we prove soundness, zero-knowledge, and succinctness, with explicit simulator constructions. We further establish a matching $\Omega(n)$ lower bound for unstructured drift, showing that structural assumptions are necessary for succinct verification. FTI enforces compliance with a verifier-specified policy rather than inferring model intent. For example, if a fine-tuner claims a rank-8 LoRA update, the verifier enforces the corresponding rank constraint, and any deviation leads to rejection.

\section{Threat Model}
\label{sec:background}

We consider a three-party setting: a trusted \textbf{provider} releases a base model $W_0$ together with a cryptographic commitment $C_0$; An untrusted \textbf{fine-tuner} produces an updated model $W^*$ and publishes a commitment $C^*$ along with a proof of compliance. A \textbf{verifier}, such as a regulator, auditor, or platform operator, is given $(C_0, C^*, \mathcal{P}, \pi)$ and decides whether to accept the update under policy $\mathcal{P}$.

\paragraph{Adversary model.}
The fine-tuner is adversarial and has full white-box access to $W_0$. It may modify model parameters arbitrarily, including through poisoned training data, adversarial fine-tuning procedures, or direct weight manipulation. The adversary can adaptively construct proofs after observing $W^*$. Its goal is to produce a model update $\Delta = W^* - W_0$ that violates the declared policy while still being accepted by the verifier. Representative attacks include injecting backdoors, removing safety alignment, or misrepresenting the structure of the update.

\paragraph{Trust assumptions.}
We assume a commitment scheme that is both binding and hiding. Binding ensures that once $C^*$ is published, the fine-tuner cannot change $W^*$, and that all parties agree on the committed base model $W_0$. Hiding ensures that the verifier learns no information about model parameters beyond policy compliance. The drift policy $\mathcal{P}$ is specified by the verifier based on external requirements, such as contractual or regulatory constraints, and is not controlled by the fine-tuner. No trusted hardware or secure execution environment is assumed.

FTI enforces structural compliance with the declared policy but does not guarantee correctness or benign behavior of the resulting model. Behavioral validation and safety evaluation remain complementary to our framework.

\section{Fine-Tuning Integrity}
\label{sec:fti}
\subsection{Model Representation and Drift Classes}

We represent a neural network as a collection of parameter blocks
$$W = (W^{(1)}, \dots, W^{(L)})$$where each block $W^{(i)} \in \R^{d_i \times d_i'}$. Given a base model $W_0$ and a fine-tuned model $W^*$, the drift at block $i$ is defined as
$\Delta^{(i)} = W^{*(i)} - W_0^{(i)}$.
A drift class $\mathcal{F} = \prod_{i=1}^L \mathcal{F}_i$ specifies the set of admissible updates at each block.

We focus on three canonical classes that capture common fine-tuning structures:
\begin{itemize}[leftmargin=*,nosep]
  \item \textbf{Norm-bounded:} $$\mathcal{F}_{\mathrm{norm}}^{(i)}(\epsilon_i) = \{ \Delta^{(i)} : \| \Delta^{(i)} \|_F \le \epsilon_i \}$$where $\|X\|_F = (\sum_{i,j} X_{i,j}^2)^{1/2}$ denotes the Frobenius norm.
  \item \textbf{Rank-bounded:} $\mathcal{F}_{\mathrm{rank}}^{(i)}(r_i) = \{ \Delta^{(i)} : \operatorname{rank}(\Delta^{(i)}) \le r_i \}$, which captures low-rank updates such as LoRA.
  \item \textbf{Sparse:} $\mathcal{F}_{\mathrm{sparse}}^{(i)}(k_i) = \{ \Delta^{(i)} : \|\Delta^{(i)}\|_0 \le k_i \}$, which captures sparse modifications such as prompt or prefix tuning.
\end{itemize}

\subsection{Commitments}

We assume a commitment scheme $\Com$~\cite{damgaard2002perfect} with security parameter $\lambda$ that is both binding and hiding.
The provider publishes a commitment to the base model as $C_0 = \Com(W_0; r_0)$, and the fine-tuner publishes a commitment to the updated model as $C^* = \Com(W^*; r^*)$.

\subsection{FTI Security Definitions}

Given a drift class $\mathcal{F}$, we define the \emph{drift compliance relation}
\[
\begin{aligned}
R_{\mathcal{F}} = \{&((C_0, C^*, \mathcal{F}), W_0, W^*) : C_0 = \Com(W_0; r_0),\;
\\&C^* = \Com(W^*; r^*),\; W^* - W_0 \in \mathcal{F}\}.
\end{aligned}
\]

\begin{definition}[Fine-Tuning Integrity]
\label{def-fti}
A proof system $\Pi = (\mathsf{Prove}, \mathsf{Verify})$ satisfies FTI for drift class $\mathcal{F}$ if it provides the following guarantees:
\begin{itemize}[leftmargin=*]
  \item \textbf{Completeness.} For all $(W_0, W^*)$ such that $W^* - W_0 \in \mathcal{F}$, an honestly generated proof $\pi \gets \mathsf{Prove}(W_0, W^*, \mathcal{F})$ is accepted, that is,
  $\mathsf{Verify}(C_0, C^*, \mathcal{F}, \pi) = 1$.
  \item \textbf{Soundness.} No probabilistic polynomial-time adversary can produce an accepting proof for any pair $(W_0, W^*)$ with $W^* - W_0 \notin \mathcal{F}$, except with negligible probability $\negl(\lambda)$.
  \item \textbf{Zero-knowledge.} The proof reveals no information beyond the fact that the drift satisfies the policy.
  \item \textbf{Succinctness.} The proof size and verification time are bounded by $$\poly(\lambda, c(\mathcal{F}), \log |W_0|)$$where $c(\mathcal{F})$ denotes the structural parameters of the drift class, such as $\epsilon$ for norm bounds, $r$ for rank constraints, and $k$ for sparsity.
\end{itemize}
\end{definition}

An SMDP (Succinct Model Difference Proof) for $\mathcal{F}$ is a proof system $$\SMDP_{\mathcal{F}} = (\mathsf{Prove}, \mathsf{Verify})$$in which the prover takes $(W_0, W^*, \mathcal{F})$, the verifier takes $(C_0, C^*, \mathcal{F})$, and the system satisfies FTI as defined in Definition~\ref{def-fti} with respect to the relation $R_{\mathcal{F}}$.

\section{SMDPs for Fundamental Drift Classes}
\label{sec:SMDPs}

We construct SMDPs for three fundamental drift classes: norm-bounded drift, low-rank drift, and sparse drift. These classes serve as basic components from which more elaborate drift policies can be assembled.

\subsection{Norm-Bounded SMDPs via Random Projections (NBDP)}
\label{subsec:norm-construction}

\paragraph{High-level idea.}
A small-norm drift produces small inner products with random directions, while a large-norm drift produces large projections with constant probability. NBDP checks a set of random projections of the committed drift and accepts only if all projections are below a threshold. The verifier never sees the drift, each projection is committed and accompanied by zero-knowledge proofs of correctness and boundedness.

We present the construction for a single parameter block, multiple blocks are handled independently.

\paragraph{Setup}

Let $\Delta = W^* - W_0 \in \R^{d \times d'}$ and define
$\mathcal{F}_{\mathrm{norm}}(\epsilon) = \{\Delta : \|\Delta\|_F \le \epsilon\}$.
Let $n = dd'$ and let $\mathrm{vec}(\cdot)$ denote vectorization, so $\|\Delta\|_F = \|\mathrm{vec}(\Delta)\|_2$. We use a binding and hiding vector commitment $\Com$~\cite{catalano2013vector}.

\subsubsection{Random Projections and Protocol}

Sample $m$ independent Rademacher vectors $r_i \in \{-1,+1\}^{dd'}$ and define
\[
z_i = \langle r_i, \mathrm{vec}(\Delta) \rangle.
\]

\begin{lemma}[Sub-Gaussian tail~\cite{pelekis2017hoeffding}]
\label{lem:hoe}
For a Rademacher vector $r \in \{-1,+1\}^n$ and any fixed $v \in \R^n$,
\[
\Pr\big[|\langle r, v\rangle| > t\big]
\le 2\exp\!\left(-\frac{t^2}{2\|v\|_2^2}\right).
\]
\end{lemma}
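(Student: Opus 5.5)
The plan is the standard Chernoff (exponential moment) argument. The only ingredient beyond Markov's inequality is the moment generating function bound for a single Rademacher variable. Write $S = \langle r, v\rangle = \sum_{j=1}^n r_j v_j$, where the $r_j$ are independent and uniform on $\{-1,+1\}$. If $v = 0$, both sides are trivial (interpreting the right-hand side as $0$ only for $t\ge 0$, or excluding this case), so assume $\|v\|_2 > 0$. By symmetry of $S$, it suffices to show $\Pr[S > t] \le \exp(-t^2/(2\|v\|_2^2))$ and then apply a union bound over the two tails $S>t$ and $S<-t$; this is where the factor $2$ comes from.

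The first step is the one-variable bound $\mathbb{E}[e^{\theta r_j v_j}] = \cosh(\theta v_j) \le e^{\theta^2 v_j^2/2}$ for every real $\theta$. To verify it, compare Taylor series term by term: $\cosh x = \sum_k x^{2k}/(2k)!$ and $e^{x^2/2} = \sum_k x^{2k}/(2^k k!)$, and $(2k)! \ge 2^k k!$ holds termwise. By independence, this gives $\mathbb{E}[e^{\theta S}] = \prod_j \cosh(\theta v_j) \le \exp(\theta^2 \|v\|_2^2/2)$.

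The second step applies Markov's inequality to $e^{\theta S}$ for $\theta > 0$:
\[
\Pr[S > t] \le e^{-\theta t}\,\mathbb{E}[e^{\theta S}] \le \exp\!\left(-\theta t + \tfrac{1}{2}\theta^2\|v\|_2^2\right).
\]
Optimizing at $\theta = t/\|v\|_2^2$ yields $\exp(-t^2/(2\|v\|_2^2))$. The same argument applied to $-S$, which has the same distribution, handles the lower tail, and summing the two tails gives the claim.

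There is no real obstacle here. The only step requiring care is the inequality $\cosh x \le e^{x^2/2}$, and the termwise comparison $(2k)! = \prod_{i=1}^{k}(2i-1)(2i) \ge \prod_{i=1}^k 2i = 2^k k!$ settles it. Alternatively, one can cite Hoeffding's lemma for variables supported on $[-|v_j|, |v_j|]$, which gives the same constant. As a note for later use in NBDP: this lemma only controls the \emph{upper} tail, so it supports completeness (small drift passes the threshold). Soundness will instead need an anti-concentration statement, which would have to come from a separate argument such as Paley--Zygmund applied with $\mathbb{E}[z_i^2] = \|\Delta\|_F^2$.
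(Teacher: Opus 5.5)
Your proof is correct and is essentially the paper's route. The paper gives no argument of its own and simply cites Hoeffding's inequality; your Chernoff bound via $\cosh x \le e^{x^2/2}$ (equivalently, Hoeffding's lemma on $[-|v_j|,|v_j|]$) is the standard proof behind that citation, valid for all $t \ge 0$, which the statement tacitly assumes.

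Your closing remark is also right and exposes a real error in the paper. The NBDP soundness proof uses this upper-tail bound as if it were anti-concentration, writing $\Pr[|z_i| \le \tau] \le 2\exp(-\tau^2/(2\|v\|_2^2))$. The resulting soundness claim fails outright: a drift $v = c\,e_1$ gives $|z_i| = |c|$ for every projection, so it is accepted with probability one whenever $(1+\gamma)\epsilon < |c| \le \tau$. That interval is nonempty because $\tau/((1+\gamma)\epsilon) = \sqrt{2\log(1/\delta)} > 1$ for $\delta < e^{-1/2}$.
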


Since $\|\mathrm{vec}(\Delta)\|_2 = \|\Delta\|_F$, if $\|\Delta\|_F \le \epsilon$ then
\[
\Pr[|z_i| > t] \le 2\exp\!\left(-\frac{t^2}{2\epsilon^2}\right).
\]
Set $m = \lceil 4\epsilon^{-2}\log(1/\delta) \rceil$ and
$\tau = \epsilon\sqrt{2\log(2m/\delta)}$.

The prover publishes commitments
\[
C_0 = \Com(\mathrm{vec}(W_0); r_0), \quad
C^* = \Com(\mathrm{vec}(W^*); r^*).
\]

\paragraph{Proof generation.}
For each $i$, the prover computes $z_i = \langle r_i, \mathrm{vec}(\Delta) \rangle$ and commits to $z_i$ as $C_{z_i} = \Com(z_i; s_i)$. The prover then produces
\begin{itemize}[leftmargin=*]
  \item a zero-knowledge proof $\pi_i^{\mathrm{lin}}$ that
  \[
  z_i = \langle r_i, \mathrm{vec}(W^*) \rangle - \langle r_i, \mathrm{vec}(W_0) \rangle
  \]
  relative to $C_0$, $C^*$, and $C_{z_i}$;
  \item a zero-knowledge range proof $\pi_i^{\mathrm{rng}}$ that $|z_i| \le \tau$.
\end{itemize}

\paragraph{Verification.}
The verifier reconstructs $r_1,\dots,r_m$ and $\tau$, verifies all $\pi_i^{\mathrm{lin}}$ and $\pi_i^{\mathrm{rng}}$, and accepts if all checks pass.

\subsubsection{Security Analysis}

\begin{theorem}[FTI-soundness of NBDP]
\label{thm:norm-soundness}
Assume $\Com$ is binding and the sub-protocols for linear relations and range proofs are sound. Let $\epsilon > 0$, $\delta \in (0,1)$, and set $\gamma = \sqrt{\log(2m/\delta)/\log(1/\delta)}  - 1$, choose $m$ and $\tau$ as above. Then any PPT adversary that produces an accepting proof with non-negligible probability must either break one of the underlying primitives or satisfy
\[
\|\Delta\|_F \le (1+\gamma)\epsilon
\]
with probability at least $1-\delta$.
\end{theorem}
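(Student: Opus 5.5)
We argue by reduction followed by a probabilistic analysis of the random projections. First, fix a PPT adversary that outputs an accepting proof $(C_{z_1},\dots,C_{z_m},\pi^{\mathrm{lin}}_i,\pi^{\mathrm{rng}}_i)$ with non-negligible probability. By binding of $\Com$, $C_0$ and $C^*$ determine unique openings $W_0,W^*$, hence a well-defined $\Delta$. By soundness of the linear-relation sub-protocol, each committed $z_i$ equals $\langle r_i,\mathrm{vec}(\Delta)\rangle$. By soundness of the range proofs, $|z_i|\le\tau$ for all $i$. If any of these fails with non-negligible probability, we obtain an adversary breaking the corresponding primitive. Otherwise, except with negligible probability, acceptance implies the event $E=\{\forall i:\ |\langle r_i,\mathrm{vec}(\Delta)\rangle|\le\tau\}$. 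It is essential here that the $r_i$ are sampled (or derived via Fiat--Shamir) \emph{after} $C^*$ is fixed. Then $\Delta$ is independent of the $r_i$, and we may treat $v=\mathrm{vec}(\Delta)$ as a fixed vector.

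It then suffices to show that if $\|v\|_2>(1+\gamma)\epsilon$, then $\Pr[E]\le\delta$. Note that $(1+\gamma)\epsilon=\epsilon\sqrt{\log(2m/\delta)/\log(1/\delta)}$, so $\tau=(1+\gamma)\epsilon\sqrt{2\log(1/\delta)}$. Equivalently, $\tau/\|v\|_2<\sqrt{2\log(1/\delta)}$ whenever the norm exceeds the bound. We need an anti-concentration estimate, the converse of Lemma~\ref{lem:hoe}: for a Rademacher sum,
\[
\Pr\big[|\langle r,v\rangle|\le\tau\big]\le 1-p(\tau/\|v\|_2),
\]
for some explicit $p>0$. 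Independence across the $m$ projections then gives $\Pr[E]\le(1-p)^m\le e^{-pm}$. With $m=\lceil 4\epsilon^{-2}\log(1/\delta)\rceil$, we want $pm\ge\log(1/\delta)$, i.e.\ $p\ge\epsilon^2/4$.

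The main obstacle is this anti-concentration step. Rademacher sums need not be Gaussian-like when $v$ is concentrated on few coordinates. My first attempt would use the Paley--Zygmund inequality on $\langle r,v\rangle^2$, combined with Khintchine's bound $\mathbb{E}\langle r,v\rangle^4\le3\|v\|_2^4$, to get $\Pr[|\langle r,v\rangle|\ge\theta\|v\|_2]\ge(1-\theta^2)^2/3$. This only controls thresholds below $\|v\|_2$. At the threshold ratio $\sqrt{2\log(1/\delta)}$, which can exceed $1$, I would fall back on a Gaussian-type lower tail that holds for spread-out $v$ (via Berry--Esseen or Montgomery-Smith's lower bound $\Pr[|\langle r,v\rangle|>c\,t\|v\|_2]\ge e^{-Ct^2}$). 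The parameters $m,\tau$ are then matched so that $m\,e^{-C\cdot 2\log(1/\delta)(\cdot)}\ge\log(1/\delta)$. This is exactly where $\gamma$ is absorbed: the slack factor $\sqrt{\log(2m/\delta)/\log(1/\delta)}$ converts the union-bound threshold $\tau$ into an effective norm threshold. I expect the constants to need care, and possibly the additional assumption $\epsilon\le1$ implicit in $m$, so that $\epsilon^{-2}$ compensates.

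Finally, I combine the pieces. Acceptance with non-negligible probability together with $\|\Delta\|_F>(1+\gamma)\epsilon$ would require either a primitive break or the event $E$, and $E$ has probability at most $\delta$. Hence, conditioned on no primitive break, $\|\Delta\|_F\le(1+\gamma)\epsilon$ holds with probability at least $1-\delta$. For completeness, I would also check the consistency of the choice of $\tau$: by Lemma~\ref{lem:hoe} and a union bound over the $m$ projections, honest drift with $\|\Delta\|_F\le\epsilon$ passes all range checks except with probability $\delta$.
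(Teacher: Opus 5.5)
There is a genuine gap at the anti-concentration step. You correctly identify it as the crux but leave it open, and it cannot be closed: the inequality you reduce to ($\Pr[E]\le\delta$ whenever $\|v\|_2>(1+\gamma)\epsilon$) is false.

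Your own identity $\tau=(1+\gamma)\epsilon\sqrt{2\log(1/\delta)}$ gives $\tau>(1+\gamma)\epsilon$ whenever $\delta<e^{-1/2}$. Let $\Delta$ have a single nonzero entry $a\in((1+\gamma)\epsilon,\tau]$. Then $z_i=\pm a$ for every Rademacher $r_i$, so every range statement $|z_i|\le\tau$ is true. The honest prover is accepted with probability $1$ and no primitive is broken, yet $\|\Delta\|_F=a>(1+\gamma)\epsilon$.

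This is exactly the concentrated case you worried about, and your fallback does not reach it. Montgomery-Smith's two-sided estimate is stated in terms of the $K$-functional $K_{1,2}(v,t)=\inf\{\|v'\|_1+t\|v''\|_2 : v=v'+v''\}$, not $t\|v\|_2$. For $v=ae_1$ this equals $a\min(1,t)\le\tau$, so it never certifies $|\langle r,v\rangle|>\tau$.

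Spreading $v$ out does not rescue the claim either. Lemma~\ref{lem:hoe} itself gives $\Pr[|z_i|>\tau]\le 2\exp(-\tau^2/2\|v\|_2^2)\to 2\delta$ as $\|v\|_2\downarrow(1+\gamma)\epsilon$. So near the boundary $\Pr[E]\ge 1-2m\delta-o(1)$ for every shape of $v$. Your requirement $p\ge\epsilon^2/4$ therefore forces $\epsilon^2$ to be of order $\delta$, which nothing in the hypotheses provides. For constant $\epsilon$, $m=\lceil 4\epsilon^{-2}\log(1/\delta)\rceil$ is only $O(\log(1/\delta))$, whereas you would need $m$ of order $1/\delta$.

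You will not find the missing step in the paper's proof either. Its Step~1 obtains $\Pr[|z_i|\le\tau]\le 2\delta$ by applying Lemma~\ref{lem:hoe} in the wrong direction; that lemma upper-bounds $\Pr[|z_i|>\tau]$. It then replaces $\|v\|_2$ by the smaller quantity $(1+\gamma)\epsilon$, which also reverses the inequality. So your diagnosis that genuine anti-concentration is needed is right, and the conclusion itself must be weakened.

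Your Paley--Zygmund bound does prove a correct version. With $\theta=1/\sqrt2$ it gives $\Pr[|z_i|>\tau]\ge 1/12$ whenever $\|\Delta\|_F\ge\sqrt2\,\tau$. Hence $\Pr[E]\le(11/12)^m\le e^{-m/12}\le\delta$ once $m\ge 12\log(1/\delta)$, which holds under the stated $m$ when $\epsilon\le 1/\sqrt3$.

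Your reduction to the primitives is fine, including the point that the $r_i$ must be fixed after $C^*$. Combined with it, this yields soundness at threshold $\sqrt2\,\tau=2\epsilon\sqrt{\log(2m/\delta)}$. The single-coordinate example shows no threshold below $\tau$ is achievable. So the completeness--soundness gap is a factor $\Theta(\sqrt{\log(m/\delta)})$, not $1+\gamma$.
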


\begin{proof}[Proof sketch]
Suppose $\|\Delta\|_F > (1+\gamma)\epsilon$. By Lemma~\ref{lem:hoe}, for each projection $z_i = \langle r_i, \mathrm{vec}(\Delta) \rangle$,
\[
\Pr[|z_i| > \tau] \ge 1 - 2\exp\!\left(-\frac{\tau^2}{2\|\Delta\|_F^2}\right).
\]
Since $\|\Delta\|_F > (1+\gamma)\epsilon$ and $\tau = \epsilon\sqrt{2\log(2m/\delta)}$, we have $$\tau^2/(2\|\Delta\|_F^2) < \log(2m/\delta)/(1+\gamma)^2.$$
By the choice of $\gamma$, this gives $\Pr[|z_i| \le \tau] \le 1 - c$ for a constant $c > 0$ depending only on $\gamma$. With $m$ independent projections, $\Pr[\forall i: |z_i| \le \tau] \le (1-c)^m \le \delta$. Soundness of the commitments and sub-protocols ensures that acceptance implies $\|\Delta\|_F \le (1+\gamma)\epsilon$ except with probability $\delta$. Full details appear in Appendix~\ref{app:proofs}.
\end{proof}

\begin{theorem}[Zero-knowledge of NBDP]
\label{thm:nbdp-zk}
If $\Com$ is a hiding commitment scheme and the sub-protocols for linear relations and range proofs are zero-knowledge, then the NBDP protocol is zero-knowledge.
\end{theorem}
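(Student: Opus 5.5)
We construct a simulator $\mathcal{S}$ that, given only the public statement $(C_0, C^*, \mathcal{F}_{\mathrm{norm}}(\epsilon))$ and the public parameters $(m, \tau, r_1,\dots,r_m)$, outputs a transcript indistinguishable from an honest NBDP proof. We then argue indistinguishability by a hybrid argument. The honest proof consists of three components: the commitments $C_{z_1},\dots,C_{z_m}$, the linear-relation proofs $\pi_i^{\mathrm{lin}}$, and the range proofs $\pi_i^{\mathrm{rng}}$. Note that $C_0$ and $C^*$ are part of the statement and need not be simulated. The random vectors $r_i$ are derived publicly (either sampled by the verifier or from public coins), so $\mathcal{S}$ can reproduce them exactly. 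If the verifier chooses them, $\mathcal{S}$ runs the verifier to obtain them. The simulator works as follows. For each $i$ it sets $C_{z_i} \gets \Com(0; s_i)$ with fresh randomness $s_i$. It then invokes the zero-knowledge simulator $\mathcal{S}^{\mathrm{lin}}$ on the statement ``$z_i = \langle r_i, \mathrm{vec}(W^*)\rangle - \langle r_i, \mathrm{vec}(W_0)\rangle$ relative to $(C_0,C^*,C_{z_i})$''. Finally, it invokes the range-proof simulator $\mathcal{S}^{\mathrm{rng}}$ on the statement ``$C_{z_i}$ opens to a value with $|z_i|\le\tau$''.

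The hybrids proceed in order. $H_0$ is the real transcript. In $H_1$ we replace the $\pi_i^{\mathrm{lin}}$ one at a time ($m$ sub-hybrids) by simulated proofs. Indistinguishability follows from the zero-knowledge of the linear sub-protocol, since the real statements are true and the reduction knows the witnesses $(W_0,W^*,r_0,r^*,z_i,s_i)$ needed to produce the remaining components. In $H_2$ we likewise replace each $\pi_i^{\mathrm{rng}}$ by simulated proofs, again using zero-knowledge of the range proof on true statements. In $H_3$ we replace each $C_{z_i}=\Com(z_i;s_i)$ by $\Com(0;s_i)$, one at a time. Since after $H_2$ no remaining component of the transcript uses the opening $(z_i,s_i)$, a distinguisher between consecutive hybrids breaks hiding of $\Com$. $H_3$ is exactly the output of $\mathcal{S}$. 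Because $m=\poly(\lambda)$, the accumulated advantage is at most $m$ times the sum of the negligible advantages from zero-knowledge and hiding, which is still negligible.

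The main obstacle is the ordering in the step to $H_3$, together with the subtlety that the sub-protocol simulators are invoked on statements that are \emph{false} once $C_{z_i}$ commits to $0$. We handle this by swapping the proofs to simulated ones while the statements are still true, and only then switching the commitments. This requires that the sub-protocol simulators take as input only the statement and not the witness, which is the standard zero-knowledge notion, so the simulated proof is a function of the commitments alone. We also need the simulators to compose under concurrent or sequential use on correlated statements sharing $C_0,C^*$. We assume this of the sub-protocols, for instance via non-interactive zero-knowledge in the random oracle model with programmable oracle, or honest-verifier zero-knowledge made non-interactive by Fiat--Shamir. If instead the protocol is interactive with a malicious verifier, we rely on the sub-protocols' malicious-verifier simulators, run with rewinding sequentially for each $i$.
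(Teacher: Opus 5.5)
Your proposal is correct and follows the paper's approach: a simulator that commits to dummy projection values (you use $0$, while the paper samples $\tilde z_i$ uniformly from $[-\tau,\tau]$, which makes no difference under hiding) and invokes $\mathcal{S}_{\mathrm{lin}}$ and $\mathcal{S}_{\mathrm{rng}}$, followed by a hybrid argument that reduces each step to hiding of $\Com$ or zero-knowledge of a sub-protocol. Your ordering (simulating the proofs while their statements are still true, and only then swapping the commitments) makes precise a step that the paper's per-projection hybrid leaves ambiguous; like the paper, though, you tacitly assume every honest range statement $|z_i|\le\tau$ holds, which the choice of $\tau$ guarantees only with probability at least $1-\delta$.
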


\begin{proof}[Proof sketch]
We construct a simulator $\mathcal{S}$ that, given only the public inputs $(C_0, C^*, \mathcal{F}_{\mathrm{norm}}(\epsilon))$, outputs a transcript indistinguishable from a real execution.

For each $i \in [m]$, $\mathcal{S}$ proceeds as follows. It samples a uniform value $\tilde{z}_i \in [-\tau, \tau]$ and computes a commitment $\tilde{C}_{z_i} = \Com(\tilde{z}_i; \tilde{s}_i)$. It then invokes the simulator $\mathcal{S}_{\mathrm{lin}}$ to generate a simulated proof $\tilde{\pi}_i^{\mathrm{lin}}$, and the simulator $\mathcal{S}_{\mathrm{rng}}$ to generate $\tilde{\pi}_i^{\mathrm{rng}}$. The final transcript is
\[
\{(\tilde{C}_{z_i}, \tilde{\pi}_i^{\mathrm{lin}}, \tilde{\pi}_i^{\mathrm{rng}})\}_{i=1}^m.
\]

To prove indistinguishability, we define a sequence of hybrids $H_0, \dots, H_m$, where $H_i$ uses real proofs for the first $i$ projections and simulated proofs for the remaining ones. The transition from $H_{i-1}$ to $H_i$ replaces either a real commitment with a simulated one, or a real sub-protocol proof with its simulated counterpart. The former is indistinguishable by the hiding property of $\Com$, while the latter follows from the zero-knowledge property of the sub-protocols. By a standard hybrid argument, the simulated transcript is indistinguishable from a real execution.

Full details are deferred to Appendix~\ref{app:zk-proofs}.
\end{proof}

\begin{theorem}[Succinctness]
For fixed $\epsilon$ and $\delta$, the verifier time and proof size are bounded by
\[
\poly\big(\lambda, \epsilon^{-2}\log(1/\delta)\big),
\]
and do not depend on $d$ or $d'$.
\end{theorem}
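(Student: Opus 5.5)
The plan is to count the proof components and the verifier's operations term by term, using the parameter choice $m = \lceil 4\epsilon^{-2}\log(1/\delta)\rceil$ from the construction. The proof consists of $m$ triples $(C_{z_i}, \pi_i^{\mathrm{lin}}, \pi_i^{\mathrm{rng}})$. A commitment to a scalar has size $\poly(\lambda)$. A range proof for $|z_i|\le\tau$ can be taken to be a logarithmic-size Bulletproofs-style proof, so its size is $\poly(\lambda, \log\tau)$. Since $\tau = \epsilon\sqrt{2\log(2m/\delta)}$, $\log\tau$ is polylogarithmic in $m$ and $1/\delta$. Summing over $i$ gives a total proof size of $m\cdot\poly(\lambda) = \poly(\lambda, \epsilon^{-2}\log(1/\delta))$.

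The main obstacle is $\pi_i^{\mathrm{lin}}$. A naive linear-relation proof against a vector commitment to $\mathrm{vec}(W^*)$ of length $n = dd'$ requires the verifier to process the public vector $r_i$, which costs $\Theta(n)$. I would address this in two steps.

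\begin{enumerate}
\item Derive the $r_i$ from a short public seed, e.g.\ Fiat--Shamir applied to $(C_0, C^*, \epsilon, \delta)$. The verifier then ``reconstructs'' them from $O(\lambda)$ bits and never stores them explicitly.
\item Instantiate $\Com$ as a succinct polynomial or inner-product commitment, so that $\langle r_i, \mathrm{vec}(W)\rangle$ is an evaluation-type claim. Its opening proof has size $\poly(\lambda,\log n)$, and its verification needs only a succinct description of $r_i$. For example, taking $r_i$ to have tensor (product) structure makes the folding challenges computable in $O(\log n)$ time.
\end{enumerate}

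With these choices, each $\pi_i^{\mathrm{lin}}$ has size and verification time $\poly(\lambda,\log n)$. The homomorphic difference of the two commitments can be formed in constant time.

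Combining these bounds, verifier time and proof size are $m\cdot\poly(\lambda,\log n)$. That gives $\poly(\lambda,\epsilon^{-2}\log(1/\delta))$, with the dependence on $d,d'$ at most polylogarithmic, entering only through the $\log|W_0|$ slack permitted by Definition~\ref{def-fti}. The literal claim of no dependence on $d$ or $d'$ holds if we treat the $\log n$ factor as absorbed into $\poly(\lambda)$, using $n \le 2^\lambda$. Alternatively, we can fold all $m$ linear claims into a single random linear combination $\sum_i \rho_i r_i$ and batch-open once, so the $\log n$ term appears only additively.

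I expect the hard step to be the second one: justifying that the linear-relation sub-protocol verifies without touching all $n$ coordinates of $r_i$. This is where the choice of structured or seeded Rademacher vectors must remain compatible with the independence used in Theorem~\ref{thm:norm-soundness}. I would first try the batching-plus-inner-product-argument route, and fall back on stating the bound with the $\poly(\log n)$ factor explicit.
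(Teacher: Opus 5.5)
Your proof-size accounting is fine, and you have identified the right obstacle: the theorem depends on verifying the linear-relation proofs without touching all $n=dd'$ coordinates of $r_i$. The mechanism you propose for this does not work, for two reasons.

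\textbf{The inner-product route is not sublinear.} A Bulletproofs-style inner-product argument costs the verifier $\Theta(n)$ however $r_i$ is structured, because the verifier must fold the $n$ commitment generators (an $n$-term multi-exponentiation). Polynomial commitments (KZG, multilinear PST-style) do verify an evaluation in $\poly(\lambda,\log n)$ time. However, an evaluation is an inner product with a vector of the form $(1,x,x^2,\dots)$ or $\bigotimes_j(1-x_j,x_j)$. Reducing an inner product with a general public $r$ to evaluations leaves the verifier evaluating the multilinear extension of $r$ at a random point, which is $\Theta(n)$ work unless $r$ has special structure.

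\textbf{The product structure breaks soundness.} To give $O(\log n)$ verification with $\pm1$ entries, the structure you suggest means $r=\bigotimes_{j=1}^{\log n}(1,s_j)$ with $s_j\in\{\pm1\}$. That is a uniformly random Walsh character $\chi_S$, generated by only $\log n$ random bits, and it destroys the anti-concentration NBDP soundness needs. If $\mathrm{vec}(\Delta)=c\,\chi_T$ for a fixed $T$, then $\langle\chi_S,\mathrm{vec}(\Delta)\rangle$ equals $c\,n$ if $S=T$ and $0$ otherwise. So all $m$ projections vanish except with probability at most $m/n$, and a drift of arbitrary norm $\|\Delta\|_F=c\sqrt n$ is accepted. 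Coarser structure such as $r=a\otimes b$ with $a\in\{\pm1\}^d$, $b\in\{\pm1\}^{d'}$ keeps constant anti-concentration, but costs the verifier $\Theta(d+d')$. The tension you flag at the end is therefore fatal to the route you chose, not a detail to check later. Your fallback, stating the bound with an explicit $\poly(\log n)$ factor, depends on the same step.

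\textbf{A repair.} Keep the $r_i$ as independent (pseudo)random Rademacher vectors and move their expansion inside the proved relation.
\begin{itemize}
\item Derive a short seed $\sigma_i$ by Fiat--Shamir from $(C_0,C^*)$ and set $r_i=\mathsf{PRG}(\sigma_i)$.
\item Prove ``$C_0,C^*,C_{z_i}$ open to $W_0,W^*,z_i$ with $z_i=\langle\mathsf{PRG}(\sigma_i),\mathrm{vec}(W^*)-\mathrm{vec}(W_0)\rangle$'' with a preprocessing commit-and-prove zkSNARK whose public input is just $(C_0,C^*,C_{z_i},\sigma_i)$.
\end{itemize}
Each $\pi_i^{\mathrm{lin}}$ then has size and verification time $\poly(\lambda)$ with no dependence on $d,d'$, so the $n\le 2^\lambda$ absorption is no longer needed. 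The prover pays for an $O(n)$-size circuit, which reintroduces the generic-SNARK prover cost of the paper's Groth16 baseline, but the theorem concerns only verifier time and proof size. The soundness argument sees the same distribution of $r_i$ up to the PRG's distinguishing advantage. Combined with your count of $m=\lceil4\epsilon^{-2}\log(1/\delta)\rceil$ commitments, range proofs and linear proofs, this gives the stated bound.

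For comparison, the paper gives no proof of this theorem beyond that implicit count. Its verifier as written ``reconstructs $r_1,\dots,r_m$'', which is already $\Theta(m\,dd')$ work. So the obstacle you identified is genuine and the paper does not address it.
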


\subsection{Low-Rank SMDPs via Matrix Polynomial Commitments (MRDP)}
\label{subsec:rank-construction}

\paragraph{High-level idea.}
A matrix has rank at most $r$ if and only if it can be expressed as a sum of $r$ outer products. We encode each weight matrix as a bivariate polynomial so that a rank-$r$ drift corresponds to a decomposition into $r$ separable terms of the form $f_k(X)\,g_k(Y)$. The verifier samples a random point $(x,y)$ and checks that the opened values are consistent with this decomposition. By the Schwartz--Zippel lemma, any drift with rank greater than $r$ satisfies the check only with small probability over the random choice of $(x,y)$. All values are opened through polynomial-commitment proofs, so the verifier learns nothing beyond compliance with the rank constraint.

\subsubsection{Construction}

We encode real-valued entries using fixed-point representation in a finite field $\F_q$~\cite{libert2024simulation,you2023efficient}. Each matrix $W \in \F_q^{d \times d'}$ is represented by a bivariate polynomial
\[
P_W(X,Y) = \sum_{i,j} W_{i,j} X^i Y^j,
\]
and the drift polynomial is defined as $P_\Delta = P_{W^*} - P_{W_0}$. If $\operatorname{rank}(\Delta) \le r$, then $\Delta$ can be written as $\sum_{k=1}^r a_k b_k^\top$, which induces a separable representation
\[
P_\Delta(X,Y) = \sum_{k=1}^r f_k(X)\,g_k(Y),
\]
where $f_k$ and $g_k$ encode the vectors $a_k$ and $b_k$.

\paragraph{Protocol.}
The prover commits to $P_{W_0}$ and $P_{W^*}$, and also commits to each pair $(f_k, g_k)$. The verifier samples $(x,y) \leftarrow \F_q^2$ uniformly, and the prover opens all committed polynomials at this point. The verifier checks
\[
P_{W^*}(x,y) - P_{W_0}(x,y) \stackrel{?}{=} \sum_{k=1}^r f_k(x)\,g_k(y).
\]

\subsubsection{Security Analysis}

\begin{lemma}[Schwartz--Zippel~\cite{moshkovitz2010alternative}]
Let $Q(X,Y)$ be a nonzero polynomial over $\F_q$ with total degree at most $D$. Then
\[
\Pr_{(x,y)\leftarrow \F_q^2}\big[Q(x,y)=0\big] \le \frac{D}{q}.
\]
\end{lemma}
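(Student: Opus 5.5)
We prove the Schwartz--Zippel bound by induction on the number of variables, reducing the bivariate statement to the univariate fact that a nonzero polynomial of degree at most $D$ over the field $\F_q$ has at most $D$ roots.

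\textbf{Base case.} This is the univariate root bound. It holds because $\F_q$ is an integral domain, so each root $a$ of $Q$ splits off a linear factor $(X-a)$ and lowers the degree by one. Consequently a uniform $x\in\F_q$ is a root with probability at most $D/q$.

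\textbf{Bivariate step.} Write $Q(X,Y)=\sum_{j=0}^{e} Q_j(X)\,Y^j$, where $e$ is the largest $Y$-degree with $Q_e\neq 0$. Such an $e$ exists because $Q$ is nonzero. Since the total degree is at most $D$, we have $\deg Q_e\le D-e$. Let $B$ be the event that $Q_e(x)=0$.
\begin{itemize}
  \item By the univariate bound, $\Pr_x[B]\le (D-e)/q$.
  \item Fix any $x$ outside $B$. Then $Q(x,Y)$ is a nonzero univariate polynomial in $Y$ of degree exactly $e$. Since $y$ is chosen independently of $x$, $\Pr_y[Q(x,y)=0\mid x]\le e/q$.
\end{itemize}

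\textbf{Combining.} We split on $B$:
\[
\Pr[Q(x,y)=0]\le \Pr[B]+\Pr[Q(x,y)=0\wedge \neg B]\le \frac{D-e}{q}+\frac{e}{q}=\frac{D}{q}.
\]

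The only step needing care is the bookkeeping that keeps the two contributions summing to $D$ rather than $2D$. This works because we condition on the leading coefficient $Q_e$ in $Y$, whose degree is at most $D-e$ by the total-degree constraint. We do not condition on an arbitrary coefficient, which would not give this bound.

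Independence of $x$ and $y$ in the uniform sample from $\F_q^2$ justifies treating $y$ as fresh once $x$ is fixed. The same argument extends verbatim by induction to any number of variables, though only the bivariate case is needed here.
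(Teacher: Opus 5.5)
Your proof is correct. Conditioning on the vanishing of the leading $Y$-coefficient $Q_e$, whose degree is at most $D-e$ by the total-degree bound, is exactly what makes the two error terms sum to $D/q$. The boundary cases $e=0$ and $e=D$ are covered without extra work.

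The paper does not prove this lemma; it only cites Moshkovitz's note. That note gives a different, induction-free argument, roughly as follows.
\begin{itemize}
\item Write $Q=Q_D+(\text{terms of degree }<D)$, where $Q_D$ is the nonzero top homogeneous part.
\item Find a direction $v$ with $Q_D(v)\neq 0$. This needs $D<q$; in two variables take $v=(a,1)$ with $a$ a non-root of the nonzero polynomial $Q_D(X,1)$.
\item Partition $\F_q^2$ into the $q$ parallel lines $\{p+tv : t\in\F_q\}$. On each line, $t\mapsto Q(p+tv)$ has $t^D$-coefficient $Q_D(v)\neq 0$, so it has at most $D$ roots. This gives at most $Dq$ zeros in total.
\end{itemize}

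In effect, you use axis-parallel lines and pay $(D-e)/q$ for the lines on which the restriction degenerates, namely those $x$ with $Q_e(x)=0$. The cited proof instead tilts the direction so that no line degenerates.

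Your route is more elementary and more general. It needs only the univariate root bound and no case split on $D<q$. It also carries over verbatim to samples drawn from $S\times S$ for any finite $S\subseteq\F_q$, and by induction to any number of variables.

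The line-partition route gives a stronger uniform conclusion: every line in the chosen direction contains at most $D$ zeros, which is the form used in low-degree testing. However, it relies on the sample space being all of $\F_q^2$, so that it is tiled by translates of a single line.
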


\begin{theorem}[FTI-soundness of MRDP]
\label{thm:mrdp-soundness}
Assume the polynomial commitment scheme is binding and its opening proofs are sound. The degree of $P_\Delta$ is $D = (d-1)+(d'-1) = d+d'-2$. If the MRDP verifier accepts with non-negligible probability, then $\operatorname{rank}(\Delta)\le r$ with probability at least $1 - D/q$. For a $d \times d'$ weight matrix with $d, d' \le 4096$ and a 256-bit prime $q$, the soundness error $D/q < 2^{-243}$ is negligible.
\end{theorem}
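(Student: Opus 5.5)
The argument is a reduction: use binding and opening soundness to pin every opened value to a fixed committed polynomial, then use Schwartz--Zippel to show that a false polynomial identity survives a random evaluation only with probability $D/q$.

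\emph{Fixing the committed polynomials.} Condition on the event that the adversary breaks neither binding nor opening soundness; this event fails with only negligible probability. On it, each commitment determines a unique polynomial: $P_{W_0}$, $P_{W^*}$, and the pairs $(f_k,g_k)$ for $k=1,\dots,r$. The opened values at $(x,y)$ are then the true evaluations of these polynomials. The pairs $(f_k,g_k)$ are fixed at commitment time, before the verifier's challenge. So if the adversary is adaptive, a rewinding or extraction argument on the opening proofs lets us treat them as fixed functions of the commitments, independent of $(x,y)$.

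\emph{The key step.} Define the residual
\[
Q(X,Y) = P_{W^*}(X,Y)-P_{W_0}(X,Y)-\sum_{k=1}^r f_k(X)\,g_k(Y).
\]
Suppose $\operatorname{rank}(\Delta)>r$. I claim $Q\neq 0$. The polynomial $P_\Delta$ is $\sum_{i,j}\Delta_{ij}X^iY^j$. Write $\sum_k f_k(X)g_k(Y)$ in coefficient form as $\sum_{i,j}M_{ij}X^iY^j$. Then $M=\sum_k \mathrm{coef}(f_k)\,\mathrm{coef}(g_k)^\top$, so $\operatorname{rank}(M)\le r$. If $Q=0$, matching coefficients gives $\Delta=M$, contradicting the rank assumption.

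Some care is needed here. The $f_k,g_k$ may have degree larger than $d-1$ and $d'-1$, in which case $M$ is a larger matrix. Its top-left $d\times d'$ block still has rank at most $r$, and any coefficients outside that block must vanish when $Q=0$, so the argument goes through. The cleaner fix, which I would state explicitly, is to require the commitment scheme to enforce the degree bounds $\deg f_k\le d-1$ and $\deg g_k\le d'-1$ in the opening proofs. This degree bound also controls the total degree of $Q$: it gives $\deg Q \le D=(d-1)+(d'-1)$.

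\emph{Applying Schwartz--Zippel.} Since $Q$ is nonzero of total degree at most $D$, and $(x,y)$ is sampled uniformly and independently of the committed polynomials, the Schwartz--Zippel lemma gives $\Pr[Q(x,y)=0]\le D/q$. The verifier accepts exactly when $Q(x,y)=0$. Hence a rank-violating $\Delta$ is accepted with probability at most $D/q$ plus the negligible probability of breaking a primitive. Equivalently, acceptance implies $\operatorname{rank}(\Delta)\le r$ except with probability $D/q+\negl(\lambda)$.

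\emph{Numerical bound.} With $d,d'\le 4096=2^{12}$, we have $D\le 2^{13}$. Taking $q\ge 2^{255}$ gives $D/q\le 2^{-242}$, which is negligible. Getting the stated $2^{-243}$ requires the sharper count $D\le 2^{13}-2$ together with $q>2^{255}$, and I would verify that constant with care.

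\emph{Main obstacle.} The delicate step is the independence of the committed $(f_k,g_k)$ from the challenge $(x,y)$, together with enforcement of their degrees; both must come from binding and from degree-bounded opening soundness. The algebraic fact that coefficient rank bounds the matrix rank is routine once the degrees are fixed.
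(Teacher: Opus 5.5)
Your proposal follows the paper's proof essentially step for step. You pin the committed polynomials via binding and opening soundness, show $Q = P_\Delta - \sum_k f_k(X)g_k(Y) \neq 0$ because the second term has a coefficient matrix of rank at most $r$ (your coefficient-matching argument is cleaner than the paper's ``grid points'' phrasing), bound $\deg Q \le d+d'-2$, and apply Schwartz--Zippel to get error $D/q + \negl(\lambda)$.

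One slip is in the numerics. $D \le 2^{13}-2$ with $q > 2^{255}$ only gives $D/q < 2^{-242}$, since the sharper count gains a factor $1-2^{-12}$, not $2$. The paper reaches $2^{-243}$ by assuming $q \ge 2^{256}$, which no 256-bit prime satisfies. So your $2^{-242}$ is the honest bound for a 256-bit prime, and it is equally negligible.
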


\begin{proof}[Proof sketch]
If $\operatorname{rank}(\Delta) > r$, then for any choice of univariate polynomials $f_k$ (degree $\le d-1$) and $g_k$ (degree $\le d'-1$), the polynomial
\[
Q(X,Y) = P_\Delta(X,Y) - \sum_{k=1}^r f_k(X)\,g_k(Y)
\]
is nonzero and has total degree at most $D = d+d'-2$. By the Schwartz--Zippel lemma, the verifier's random evaluation at $(x,y) \leftarrow \F_q^2$ detects this with probability at least $1 - D/q$. Soundness of the commitment scheme ensures that the prover cannot alter openings after committing. To amplify soundness, the verifier may repeat with $t$ independent challenges, reducing the error to $(D/q)^t$; however, for cryptographic-size fields ($q \ge 2^{256}$) a single round already yields negligible error. Full details appear in Appendix~\ref{app:proofs}.
\end{proof}

\begin{theorem}[Zero-knowledge of MRDP]
\label{thm:mrdp-zk}
If the polynomial commitment scheme is hiding and its opening proofs are zero-knowledge, then the MRDP protocol is zero-knowledge.
\end{theorem}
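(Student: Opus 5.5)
We construct a simulator $\mathcal{S}$ and follow the template used for Theorem~\ref{thm:nbdp-zk}. The simulator receives only the public inputs $(C_0, C^*, \mathcal{F}_{\mathrm{rank}}(r))$; the commitments $C_0$ and $C^*$ are treated as already fixed. It then proceeds in four steps.
\begin{enumerate}
\item It commits to $2r$ dummy polynomials: for each $k \in [r]$ it sets $\tilde C_{f_k} = \Com(0;\tilde\rho_k)$ and $\tilde C_{g_k} = \Com(0;\tilde\sigma_k)$, with $\deg f_k \le d-1$ and $\deg g_k \le d'-1$.
\item It samples the verifier's challenge $(x,y)\leftarrow\F_q^2$. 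For an honest-verifier argument this is sampled uniformly. For a malicious verifier we either rewind, or make the protocol non-interactive via Fiat--Shamir with a programmable random oracle.
\item It chooses the opened values. It samples $\tilde u_0 = \tilde P_{W_0}(x,y)$ uniformly and $\tilde a_k,\tilde b_k$ uniformly for $k\in[r]$. It then sets $\tilde u^* = \tilde u_0 + \sum_k \tilde a_k \tilde b_k$, so the verifier's equation holds by construction.
\item It calls the opening-proof simulator $\mathcal{S}_{\mathrm{open}}$ on each claimed evaluation, for example $(C_0,(x,y),\tilde u_0)$ and $(\tilde C_{f_k},x,\tilde a_k)$, and outputs all commitments, values and simulated opening proofs.
\end{enumerate}

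Indistinguishability goes through a short hybrid sequence.
\begin{itemize}
\item $H_0$ is the real transcript.
\item $H_1$ replaces every opening proof by the output of $\mathcal{S}_{\mathrm{open}}$, still on the true evaluations. It is indistinguishable from $H_0$ by the zero-knowledge property of the opening proofs, via a hybrid over the $2r+2$ openings.
\item $H_2$ replaces the commitments to $f_k,g_k$ by commitments to $0$. It is indistinguishable from $H_1$ by hiding. This step is legitimate only because in $H_1$ no opening proof uses the committed polynomial's randomness any more.
\item $H_3$ replaces the true evaluations $(P_{W_0}(x,y), f_k(x), g_k(y))$ by the simulator's uniform samples, with $P_{W^*}(x,y)$ still determined by the check equation.
\end{itemize}

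The main obstacle is the step from $H_2$ to $H_3$. Unlike NBDP, the opened values are genuine evaluations of the secret polynomials and are not themselves hidden by a commitment. A bare evaluation $P_{W_0}(x,y)$ or $f_k(x)$ leaks one linear functional of the weights, and it is not uniform in general: $f_k$ could be zero, for instance.

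I would resolve this in one of two ways. The first is to adopt a standard hiding polynomial commitment, such as hiding KZG or a masked variant. Each committed polynomial is blinded by adding a random polynomial that vanishes on no challenge-relevant structure, or the prover opens $f_k + \theta_k$ together with a commitment to the random mask $\theta_k$. This makes each opened value uniform, subject only to the single linear relation the verifier checks. The relation itself carries no information, since it holds in every accepting transcript.

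The second route is to have the prover never reveal evaluations at all. It commits to $u_0,u^*,a_k,b_k$ and proves in zero-knowledge the product relation $u^*-u_0=\sum_k a_kb_k$, using a $\Sigma$-protocol for committed products. The simulator then needs only hiding and the sub-protocol simulators, exactly as in the proof of Theorem~\ref{thm:nbdp-zk}.

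I would present the second route as the primary argument, since it matches the hypotheses of the theorem ("hiding" and "zero-knowledge openings"). I would note that with plain evaluation openings the statement requires interpreting "zero-knowledge openings" as openings into hiding commitments. Under either route, the distributions in $H_3$ and the simulator's output coincide exactly. Chaining the hybrids then gives computational indistinguishability with advantage bounded by $(2r+2)\cdot(\mathrm{Adv}_{\mathrm{zk}}+\mathrm{Adv}_{\mathrm{hide}})$.
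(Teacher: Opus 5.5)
Your proof is sound in its main line and departs from the paper's in two places that matter.

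The paper's simulator commits to random $\tilde f_k,\tilde g_k$, programs the Fiat--Shamir oracle, and invokes the opening simulator ``consistently with'' the check equation. Its hybrids replace the $2r$ commitments first (hiding) and the opening proofs second (ZK). It has no hybrid for the opened values $P_{W_0}(x,y)$, $P_{W^*}(x,y)$, $f_k(x)$, $g_k(y)$ themselves, and tacitly treats them as covered by opening-ZK.

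You order the hybrids as a reduction requires: a hiding adversary cannot produce real openings of a commitment whose randomness it lacks, so openings must be simulated first. You also isolate exactly the step the paper skips. With evaluations sent in the clear, hiding plus opening-ZK does not make them simulatable. For instance, take $W^*=W_0$ with $P_{W_0}\equiv 0$ versus $P_{W_0}\equiv 1$; the real transcripts reveal $0$ resp.\ $1$, so any simulator given only $(C_0,C^*)$ would break hiding.

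The price of your fix is that route two proves zero-knowledge for a modified MRDP, with evaluations kept committed and $u^*-u_0=\sum_k a_kb_k$ proved by a $\Sigma$-protocol. That product argument is an extra ZK sub-protocol not among the theorem's hypotheses, so ``matches the hypotheses'' is overstated. Your advantage bound should also include its simulator term and the hiding of the $2r+2$ evaluation commitments.

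Route one should be dropped or made precise, for three reasons. Pedersen-style hiding KZG still reveals the evaluation at the opened point. Masking $f_k$ by $\theta_k$ breaks the bilinear check unless you again prove a product on committed values, which means falling back to route two. And $P_{W_0}$ is fixed inside the provider's $C_0$, so the fine-tuner cannot blind it; even re-randomising the factorisation would still leak $P_{W_0}(x,y)$ and $P_\Delta(x,y)$.

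In short, the paper's route is shorter and keeps the protocol as written but rests on an unjustified step. Yours actually goes through, for a slightly strengthened protocol.
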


\begin{proof}[Proof sketch]
We construct a simulator $\mathcal{S}$ that, given only the public inputs $(C_0, C^*, r)$, outputs a transcript indistinguishable from a real execution.

The simulator samples $r$ pairs of random univariate polynomials $(\tilde{f}_k, \tilde{g}_k)$ of the appropriate degrees and commits to them. It then samples a random evaluation point $(\tilde{x}, \tilde{y})$ by programming the Fiat--Shamir oracle in the random oracle model. Using the simulator for the polynomial commitment opening protocol, $\mathcal{S}$ produces openings at $(\tilde{x}, \tilde{y})$ that are consistent with
\[
P_{W^*}(\tilde{x}, \tilde{y}) - P_{W_0}(\tilde{x}, \tilde{y})
= \sum_{k=1}^r \tilde{f}_k(\tilde{x}) \tilde{g}_k(\tilde{y}).
\]
The resulting transcript consists of the simulated commitments and opening proofs.

To prove indistinguishability, we define a sequence of hybrids that gradually replace the real execution with the simulated one. First, replace each real polynomial commitment with a commitment to the simulated polynomials, which is indistinguishable by the hiding property. Next, replace each real opening proof with a simulated proof, which is indistinguishable by the zero-knowledge property of the opening protocol. By a standard hybrid argument, the simulated transcript is indistinguishable from a real execution.

Full details are deferred to Appendix~\ref{app:zk-proofs}.
\end{proof}

\begin{theorem}[Succinctness]
For fixed rank $r$, the verifier time and proof size are bounded by $\poly(\lambda, r)$ and do not depend on $d$ or $d'$.
\end{theorem}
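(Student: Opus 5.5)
The plan is to count, message by message, what the MRDP transcript contains, and bound each item by the parameters of the underlying polynomial commitment scheme. We instantiate with a bivariate/univariate polynomial commitment with constant-size commitments and opening proofs, e.g.\ a KZG-style scheme or its bivariate extension. In such a scheme each commitment and each evaluation proof is $O(1)$ group elements, i.e.\ $\poly(\lambda)$ bits, independent of the degree of the committed polynomial. Verifying one opening costs $O(1)$ pairings and group operations, i.e.\ $\poly(\lambda)$ time. The commitments $C_0$ and $C^*$ to $P_{W_0}$ and $P_{W^*}$ are public inputs and are not counted in the proof.

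First I would itemize the proof $\pi$:
\begin{itemize}
\item the $2r$ commitments to $f_1,\dots,f_r,g_1,\dots,g_r$;
\item the evaluation point $(x,y)$, two field elements of $O(\lambda)$ bits, or nothing under Fiat--Shamir since the verifier recomputes it by hashing the transcript;
\item the opened values $P_{W_0}(x,y)$, $P_{W^*}(x,y)$, $f_k(x)$ and $g_k(y)$ for $k\in[r]$, which are $2r+2$ field elements;
\item the $2r+2$ opening proofs.
\end{itemize}
The total size is $O(r)\cdot\poly(\lambda)=\poly(\lambda,r)$.

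Next I would account for verifier work. It consists of $2r+2$ opening verifications at $\poly(\lambda)$ each, one hash evaluation over a transcript of size $\poly(\lambda,r)$, and the arithmetic check $P_{W^*}(x,y)-P_{W_0}(x,y)=\sum_{k=1}^r f_k(x)g_k(y)$. That check takes $r$ multiplications and $r+1$ additions in $\F_q$, each $\poly(\lambda)$ since $\log q=\Theta(\lambda)$. So the verifier runs in $\poly(\lambda,r)$ time.

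The main obstacle is to make sure no hidden dependence on $d,d'$ enters through the commitment scheme. Many schemes have verification keys or opening proofs whose size grows with the degree: FRI-based schemes give $O(\log^2(dd'))$, and Bulletproofs-style schemes give $O(\log(dd'))$ proofs with $O(dd')$ verification. I would handle this by fixing a pairing-based scheme whose SRS is generated once, whose verifier needs only a constant number of SRS elements, and whose verification equation involves a constant number of pairings regardless of degree. The degree bounds $\deg f_k\le d-1$ and $\deg g_k\le d'-1$ are not needed for succinctness; they matter only for soundness. Where they must be enforced, I would use degree-bound proofs of constant size (shifted-SRS checks). With these choices every per-item cost is $\poly(\lambda)$ and the claimed $\poly(\lambda,r)$ bound follows.

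If a logarithmic-size scheme is used instead, the same count gives $\poly(\lambda,r,\log(dd'))$. This still satisfies the succinctness requirement of Definition~\ref{def-fti}, since there $\log|W_0|$ is allowed.
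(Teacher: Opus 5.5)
Your proof is correct and is the argument the paper leaves implicit: the paper states this theorem without proof and relies on constant-size KZG-style commitments and openings. Your count of $2r$ commitments plus $2r+2$ opened values and $2r+2$ opening proofs is also consistent with the reported $2.2$\,KB for $r=8$, assuming 48-byte group elements and 32-byte field elements. The closing fallback is not needed, since the statement claims full independence from $d,d'$; it would also only hold for schemes with polylogarithmic verification such as FRI, not for inner-product-argument schemes, whose verifier is linear in the degree.
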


\subsection{Sparse SMDPs via Streaming Interactive Proofs (SDIP)}
\label{subsec:sparse-construction}

\paragraph{High-level idea.}
The prover claims that only~$k$ parameters have changed.
To verify this without examining every parameter, the verifier uses random linear checks: it picks a random vector~$r$ and asks the prover to show that $\langle r, W^* - W_0 \rangle$ equals $\langle r, \Delta' \rangle$, where $\Delta'$ is the prover's claimed sparse update.
If any parameter outside the declared support has been modified, the two inner products will differ with overwhelming probability over the random choice of~$r$.
Repeating this check with independent challenges drives the false-acceptance probability to negligible levels.
The prover reveals neither the support locations nor the update values, using a committed binary indicator vector and zero-knowledge sub-protocols.

\paragraph{Setup}

We vectorize all parameters into $W_0, W^* \in \R^n$ and consider $\mathcal{F}_{\mathrm{sparse}}(k) = \{\Delta \in \R^n : \|\Delta\|_0 \le k\}$.
Entries are encoded into $\F_q$ via fixed-point representation for commitment and algebraic checks.

\subsubsection{Construction}

Let $S \subseteq [n]$ with $|S| \le k$ be the support of $\Delta$, and let $\Delta'$ be $\Delta$ restricted to $S$.
The prover must show: $|S| \le k$, $W^* = W_0 + \Delta'$ on $S$, and $\Delta = \Delta'$ everywhere.
Since checking all coordinates requires linear time, we use random linear checks.

\begin{lemma}\label{lem:sparse-inner}
For nonzero $h \in \F_q^n$ and uniform $r \in \F_q^n$, $\Pr[\langle r, h \rangle = 0] = 1/q$.
\end{lemma}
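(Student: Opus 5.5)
Since $h \neq 0$, I fix an index $j \in [n]$ with $h_j \neq 0$. I then split the inner product as
\[
\langle r, h\rangle = h_j r_j + \sum_{i \neq j} h_i r_i ,
\]
and condition on the coordinates $(r_i)_{i \neq j}$. Because $r$ is uniform on $\F_q^n$, its coordinates are independent and each is uniform on $\F_q$. So after conditioning, $r_j$ is still uniform on $\F_q$, and the quantity $c = \sum_{i \neq j} h_i r_i$ is a fixed field element.

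Conditioned on those coordinates, the event $\langle r, h\rangle = 0$ is the event $h_j r_j = -c$. Since $h_j \neq 0$ and $\F_q$ is a field, $h_j$ is invertible, so this holds exactly when $r_j = -c\,h_j^{-1}$. That is a single prescribed value, and a uniform $r_j$ equals it with probability exactly $1/q$. The conditional probability is therefore $1/q$ for every setting of $(r_i)_{i\neq j}$. Averaging over those settings by the law of total probability gives $\Pr[\langle r, h\rangle = 0] = 1/q$ exactly.

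An equivalent way to see this is that $r \mapsto \langle r, h\rangle$ is a nonzero linear functional $\F_q^n \to \F_q$. It is therefore surjective, and its fibres are cosets of the kernel, each of size $q^{n-1}$. The zero fibre thus has probability $q^{n-1}/q^n = 1/q$. I would present the conditioning argument as the main proof and mention this counting argument as a remark.

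There is no real obstacle here. The one point needing care is to use that $q$ is prime, so that $\F_q$ is a field and $h_j$ is invertible. Over a ring such as $\mathbb{Z}_{2^k}$, which can arise from fixed-point encodings, the claim can fail for a zero-divisor $h_j$. I would note this to justify the paper's choice of a prime field for the encoding. Since the statement asserts equality rather than an upper bound, I would also state explicitly that the argument gives exactly $1/q$.
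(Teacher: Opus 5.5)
Your proof is correct and complete. Fixing $j$ with $h_j\neq 0$ and conditioning on $(r_i)_{i\neq j}$ gives exactly $1/q$, as does the equivalent count of the $q^{n-1}$ elements in the kernel of the nonzero functional $r\mapsto\langle r,h\rangle$.

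The paper states this lemma without proof, as a standard fact, so your argument simply supplies the usual justification. One small refinement: what the argument uses is that $\F_q$ is a field, which holds for any prime power $q$, so primality is not essential. It also uses that $h$ is fixed independently of $r$, which the lemma's hypothesis provides.
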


If $\Delta \neq \Delta'$, the verifier detects the discrepancy via $\langle r, W^* \rangle - \langle r, W_0 \rangle \neq \langle r, \Delta' \rangle$ with probability $1 - 1/q$. Repeating with $t$ independent challenges yields error $q^{-t}$.

\paragraph{Protocol.}
The prover commits to a binary indicator $u \in \{0,1\}^n$ with $u_i = 1$ iff $i \in S$, and to $\Delta'$.
The prover produces ZK proofs that: $\sum_i u_i \le k$; $\Delta'_i = 0$ when $u_i = 0$; $W^* = W_0 + \Delta'$ on supported indices; and for each challenge $r^{(j)} \leftarrow \F_q^n$, that $\langle r^{(j)}, W^* \rangle - \langle r^{(j)}, W_0 \rangle = \langle r^{(j)}, \Delta' \rangle$.

\subsubsection{Security Analysis}

\begin{theorem}[FTI-soundness of SDIP]
Assume the commitment schemes are binding and all sub-protocols are sound.
If the verifier accepts with non-negligible probability, then $\|\Delta\|_{0} \le k$ with probability at least $1 - q^{-t}$.
\end{theorem}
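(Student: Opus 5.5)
The argument is a reduction: if a cheating prover is accepted on a non-sparse drift, then either one of the primitives fails or the random linear checks fail, and the latter happens with probability only $q^{-t}$. Fix a PPT adversary $\mathcal{A}$ that outputs commitments $C_0, C^*$, commitments $C_u, C_{\Delta'}$ to the indicator and claimed sparse update, and an accepting proof with non-negligible probability.

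\textbf{Step 1 (extract committed values).} By binding of $\Com$, each of $C_0, C^*, C_u, C_{\Delta'}$ opens to a unique value $W_0, W^*, u, \Delta'$, except with negligible probability. By soundness (knowledge soundness) of the sub-protocols, acceptance of each ZK sub-proof implies that the corresponding statement holds for these unique openings, again except with negligible probability. Concretely:
\begin{itemize}[leftmargin=*]
  \item $u \in \{0,1\}^n$ and $\sum_i u_i \le k$;
  \item $\Delta'_i = 0$ whenever $u_i = 0$, so $\|\Delta'\|_0 \le \sum_i u_i \le k$;
  \item for each challenge $r^{(j)}$, $\langle r^{(j)}, W^* - W_0\rangle = \langle r^{(j)}, \Delta'\rangle$.
\end{itemize}
Union-bounding over the polynomially many sub-proofs keeps the total failure probability of this step negligible.

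\textbf{Step 2 (random linear checks).} Let $h = \Delta - \Delta' \in \F_q^n$, where $\Delta = W^* - W_0$. Suppose $\|\Delta\|_0 > k$. Since $\|\Delta'\|_0 \le k$, we get $\Delta \neq \Delta'$, so $h \neq 0$. The challenges $r^{(j)}$ are sampled after the commitments are fixed (by interaction or Fiat--Shamir), and binding fixes $h$ before they are drawn. Hence Lemma~\ref{lem:sparse-inner} applies to each challenge: $\Pr[\langle r^{(j)}, h\rangle = 0] = 1/q$. The $t$ challenges are independent, so all $t$ checks pass with probability at most $q^{-t}$. Combining with Step 1, acceptance with $\|\Delta\|_0 > k$ happens with probability at most $q^{-t} + \negl(\lambda)$. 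Thus, conditioned on acceptance and no primitive being broken, $\|\Delta\|_0 \le k$ except with probability $q^{-t}$.

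\textbf{Main obstacle.} The delicate point is the order of quantifiers in Step 2. A cheating prover could choose $\Delta'$ adaptively, so we must argue that $h$ is fixed before the $r^{(j)}$ are sampled. This is where binding of $C_{\Delta'}$ and the protocol ordering (commitments before challenges) are essential. In the non-interactive setting, I would handle it by modelling the hash as a random oracle and using a forking or extraction argument: each oracle query fixes a candidate $h$, and a union bound over $Q$ queries gives error $Q\cdot q^{-t}$, which is still negligible.

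A second, minor subtlety is the fixed-point encoding. The sparsity statement is about the encodings in $\F_q$, and distinct real entries must map to distinct field elements. This holds under the encoding assumption, so $\|\cdot\|_0$ is preserved.
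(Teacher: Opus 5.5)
Your proposal is correct and follows the paper's proof essentially step for step: binding and sub-protocol soundness pin down a $k$-sparse committed $\Delta'$, so $h=\Delta-\Delta'\neq 0$ is fixed before the challenges, and Lemma~\ref{lem:sparse-inner} applied to $t$ independent challenges gives error $q^{-t}+\negl(\lambda)$. Your remark that Fiat--Shamir grinding costs a factor $Q$ (error $Q\cdot q^{-t}$ in the random-oracle model) is more careful than the paper, which simply asserts $q^{-t}$ in the Fiat--Shamir setting, but it does not change the route.
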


\begin{proof}[Proof sketch]
If $\|\Delta\|_{0} > k$, any claimed support with $|S| \le k$ omits at least one nonzero entry, so $\Delta - \Delta' \neq 0$. By Lemma~\ref{lem:sparse-inner} and independence of the $t$ challenges, the probability that all linear checks miss this discrepancy is at most $q^{-t}$. Full proof in Appendix~\ref{app:proofs}.
\end{proof}

\begin{theorem}[Zero-knowledge of SDIP]
\label{thm:sdip-zk}
If $\Com$ is a hiding commitment scheme and the sub-protocols for sum checks, consistency proofs, and linear-relation proofs are zero-knowledge, then the SDIP protocol is zero-knowledge.
\end{theorem}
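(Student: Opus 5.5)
We prove the claim by constructing a simulator $\mathcal{S}$ and running a hybrid argument, following the pattern of Theorems~\ref{thm:nbdp-zk} and~\ref{thm:mrdp-zk}. The simulator receives only the public inputs $(C_0, C^*, \mathcal{F}_{\mathrm{sparse}}(k))$ and the verifier's challenges $r^{(1)},\dots,r^{(t)}$; in the non-interactive setting it programs the Fiat--Shamir oracle to obtain them. It produces:
\begin{enumerate}[leftmargin=*]
  \item a commitment $\tilde{C}_u = \Com(0^n; \tilde{s}_u)$ in place of the commitment to the indicator $u$;
  \item a commitment $\tilde{C}_{\Delta'} = \Com(0^n; \tilde{s}_\Delta)$ in place of the commitment to $\Delta'$;
  \item for each statement in the protocol, the output of the corresponding sub-protocol simulator on the appropriate (simulated) commitments. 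These statements are the sum check $\sum_i u_i \le k$, the consistency proofs ($\Delta'_i = 0$ when $u_i = 0$, and $W^* = W_0 + \Delta'$ on the support), and, for each $j \in [t]$, the linear-relation proof $\langle r^{(j)}, W^*\rangle - \langle r^{(j)}, W_0\rangle = \langle r^{(j)}, \Delta'\rangle$.
\end{enumerate}
If a linear-relation sub-protocol exposes a committed intermediate value, such as a commitment to $\langle r^{(j)}, \Delta'\rangle$, $\mathcal{S}$ commits to $0$ there as well. It then calls the simulator on that commitment.

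Indistinguishability goes through a short chain of hybrids.
\begin{itemize}[leftmargin=*]
  \item $H_0$ is the real transcript.
  \item $H_1$ replaces every sub-protocol proof, one at a time, with its simulated counterpart computed on the real commitments. Each step is justified by the zero-knowledge property of that sub-protocol. The number of sub-proofs is $O(t)$ plus a constant, so the loss is polynomially many negligible terms.
  \item $H_2$ replaces the real commitments to $u$, $\Delta'$ and any intermediate values with commitments to zero. Each replacement reduces to the hiding property of $\Com$. Once all proofs are simulated, no opening or witness of these commitments is used anywhere, so the hiding reduction can generate the rest of the transcript itself.
  \item $H_2$ is exactly the output of $\mathcal{S}$.
\end{itemize}

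The main obstacle is ordering the hybrids correctly. We cannot switch the commitments to $u$ and $\Delta'$ while real sub-protocol proofs still depend on their openings. For that reason, the simulated proofs must be introduced first, in $H_1$, and the commitments changed only afterwards, in $H_2$. We must also check one property of the sub-protocol simulators: they must still produce indistinguishable output when run on commitments whose contents do not satisfy the relation. This holds for statistically or computationally ZK proofs of statements about hiding commitments, because the simulator never uses the witness. If the sub-protocol ZK guarantee is stated only for true statements, we can instead invoke it in $H_1$, where the statements are still true. The later switch in $H_2$ is then a pure commitment-hiding step, with the simulated proofs generated by the reduction.

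Finally, the challenges $r^{(j)}$ are public and independent of the witness, so they are distributed identically in the real and simulated transcripts. In the random oracle model the programming step succeeds except with negligible probability. A union bound over all hybrids completes the argument.
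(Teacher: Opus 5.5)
Your proof is correct, but it is organized differently from the paper's in two respects.

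First, the simulated witness differs. The paper's simulator plants a fake witness: a random $\tilde u$ with $\|\tilde u\|_1 = k$, a random $\tilde\Delta'$ supported on it, and $\tilde\alpha_j = \langle r^{(j)},\tilde\Delta'\rangle$. You simply commit to zeros. The fake witness buys the paper a little: the sum-check and ``$\Delta'_i=0$ off the support'' statements stay true in the simulated world, so ordinary ZK suffices for those sub-protocols. It does not help for the linear-relation checks or for ``$W^*=W_0+\Delta'$ on the support''. Those are stated relative to $C_0,C^*$ and are false for $\tilde\Delta'$. So the paper implicitly relies on the same simulator-on-false-statements property that you flag.

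Second, the hybrid ordering is reversed. The paper first swaps the commitments to $u,\Delta'$ (hiding) and only then swaps the sub-protocol proofs (ZK). Your ordering is the one under which each step is a genuine reduction. In the paper's intermediate hybrid (commitments swapped, proofs still real) there is no witness for the linear-relation proofs. A hiding adversary also could not produce real proofs for a challenge commitment it cannot open. In your $H_1$, by contrast, every statement is still true, and in your $H_2$ the hiding reduction only runs witness-free simulators as post-processing.

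Finally, you keep intermediate values such as $\langle r^{(j)},\Delta'\rangle$ committed rather than revealed. This makes explicit something the paper leaves to ``generate a consistent transcript''. If that value were sent in the clear, it would expose $\langle r^{(j)},\Delta\rangle$, a function of the committed models, and no witness-free simulator could reproduce it.
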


\begin{proof}[Proof sketch]
We construct a simulator $\mathcal{S}$ that, given only the public inputs $(C_0, C^*, k)$, outputs a transcript indistinguishable from a real execution.

The simulator samples a random binary vector $\tilde{u} \in \{0,1\}^n$ with $\|\tilde{u}\|_1 = k$, and a random sparse vector $\tilde{\Delta}'$ supported on $\tilde{u}$. It commits to both $\tilde{u}$ and $\tilde{\Delta}'$. It then invokes the simulators for the sum-check and consistency sub-protocols. For each linear challenge $r^{(j)}$, the simulator computes $\tilde{\alpha}_j = \langle r^{(j)}, \tilde{\Delta}' \rangle$ and invokes the simulator for the linear-relation proof to generate a consistent transcript.

The resulting transcript consists of commitments to $\tilde{u}$ and $\tilde{\Delta}'$, together with simulated proofs for all sub-protocols and linear checks.

To prove indistinguishability, we define a sequence of hybrids that replace the real execution with the simulated one component by component. First, commitments to $u$ and $\Delta'$ are replaced with commitments to $\tilde{u}$ and $\tilde{\Delta}'$, which is indistinguishable by the hiding property of $\Com$. Next, each sub-protocol proof and linear-relation proof is replaced with its simulated counterpart, which is indistinguishable by the zero-knowledge property of the respective protocols. By a standard hybrid argument, the simulated transcript is indistinguishable from a real execution.

Since neither the support indicator $u$ nor the update values $\Delta'$ are opened, the verifier learns no information about support locations or update magnitudes beyond the constraint $\|u\|_1 \le k$.

Full details are deferred to Appendix~\ref{app:zk-proofs}.
\end{proof}

\begin{theorem}[Succinctness]
For fixed $k$ and $t$, both verifier running time and proof size are $\poly(\lambda, k, t, \log n)$, independent of $n$ beyond logarithmic factors.
\end{theorem}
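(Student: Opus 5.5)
We prove the SDIP succinctness theorem by a component-by-component accounting of what the verifier receives and computes. The work splits into three groups: the commitments, the sub-protocol proofs, and the $t$ random linear checks. For each group we bound the proof size and the verifier time separately, then add the bounds.

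\textbf{Commitments.} The proof contains a constant number of commitments, namely to $u$ and to $\Delta'$. With a succinct vector commitment, such as a Pedersen-type or KZG-type scheme, each has size $\poly(\lambda)$ independent of $n$.

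\textbf{Sub-protocols.} We instantiate each sub-protocol with a succinct argument whose cost depends only on its own statement. There are three of them.
\begin{itemize}
\item The sum check $\sum_i u_i \le k$ reduces to an opening of a committed aggregate together with a range proof on a value in $[0,k]$. This costs $\poly(\lambda, \log k)$.
\item The consistency constraints ($\Delta'_i = 0$ whenever $u_i = 0$, and $W^* = W_0 + \Delta'$ on the support) are handled by batching all $n$ coordinate constraints into a single random linear combination. The combined identity is checked by an inner-product or polynomial-commitment argument with $O(\log n)$ rounds and group elements, costing $\poly(\lambda, \log n)$. Here we use that the $u$-commitment can be proven to have at most $k$ ones via opening proofs of size $\poly(\lambda, k, \log n)$.
\item For each of the $t$ challenges $r^{(j)}$, the verifier needs the three inner products $\langle r^{(j)}, W^* \rangle$, $\langle r^{(j)}, W_0 \rangle$ and $\langle r^{(j)}, \Delta' \rangle$. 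Each is obtained as an inner-product opening against $C^*$, $C_0$ and $\Com(\Delta')$, costing $\poly(\lambda, \log n)$, so the $t$ checks together cost $\poly(\lambda, t, \log n)$.
\end{itemize}
Summing these contributions gives proof size $\poly(\lambda, k, t, \log n)$.

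\textbf{Main obstacle.} The hard part is verifier time rather than proof size. If the verifier had to materialize a challenge $r^{(j)} \in \F_q^n$ or fold generators of length $n$, it would spend $\Omega(n)$ time. We would avoid this in two ways.
\begin{itemize}
\item Derive each $r^{(j)}$ from a short seed with structured form $r^{(j)}_i = \rho_j^{\,i}$, a power vector, or as a tensor product of $\log n$ field elements. The inner product then becomes a polynomial evaluation, which a polynomial commitment opens with an $O(1)$ or $O(\log n)$ verifier.
\item Use a structured reference string (KZG), so that the verifier's work is $O(\log n)$ field operations plus a constant number of pairings.
\end{itemize}
Soundness of the random linear check is preserved under this choice. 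Lemma~\ref{lem:sparse-inner} is replaced by its Schwartz--Zippel analogue with error $n/q$ per challenge, which is still negligible for $q \ge 2^{256}$.

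With these instantiations, every verifier step costs $\poly(\lambda, k, t, \log n)$, and the theorem follows.
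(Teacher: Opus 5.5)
The paper does not actually prove this theorem. The appendix covers only SDIP soundness and zero-knowledge, and the implicit justification is the component count you begin with: two commitments, a constant number of sub-protocol proofs, and $t$ linear-relation proofs, each assumed succinct. Your accounting reproduces that count, but it goes further in a way that matters. As written, the protocol draws each $r^{(j)}$ uniformly from $\F_q^n$, so the verifier cannot even sample the challenge in time sublinear in $n$, let alone check a relation against it. The paper's claim silently ignores this.

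Your remedy is the right one: use a power-vector or tensor-product challenge, so that $\langle r^{(j)},\cdot\rangle$ becomes a single univariate or multilinear evaluation opened under a KZG-type scheme. Your Schwartz--Zippel replacement for Lemma~\ref{lem:sparse-inner} is correct. What this buys is a protocol that is genuinely verifier-succinct. What it costs is that you are proving the theorem for a modified protocol. Its soundness error is about $(n/q)^t$ (or $(\log n/q)^t$ with tensor challenges) rather than the $q^{-t}$ stated in the SDIP soundness theorem, so that theorem should be restated for the protocol you analyze.

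A few things to tighten.

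\textbf{The ``at most $k$ ones'' remark.} Drop the claim that $u$ can be shown to have at most $k$ ones via opening proofs of size $\poly(\lambda,k,\log n)$. Opening positions of $u$ reveals the support, contradicting Theorem~\ref{thm:sdip-zk}, and the sum check already covers this.

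\textbf{The consistency bullet.} The inner-product-argument option has $O(\log n)$ proof size but $\Theta(n)$ verifier time, since the verifier folds $n$ generators. Only the polynomial-commitment route survives. Commit to $W_0$, $W^*$, $u$, $\Delta'$ under one scheme. Then prove the Hadamard constraints $(1-u)\circ\Delta' = 0$ and $u\circ(1-u) = 0$ by a quotient check against the vanishing polynomial $Z_H$ of a size-$n$ domain $H$, or by sumcheck in the multilinear setting. You omit the binarity constraint $u\circ(1-u)=0$, but the sum check needs it in order to mean anything.

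\textbf{Challenge and encoding must match.} The quotient trick needs Lagrange-basis commitments, for which the natural challenge is $r_i = L_i(\rho)$ rather than $\rho^i$. This gives $\langle r, w\rangle = W(\rho)$ with error at most $n/q$, since $\sum_i h_i L_i$ is a nonzero polynomial of degree $<n$ whenever $h\neq 0$.

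\textbf{Zero-knowledge of the linear checks.} The verifier should not receive the three inner products separately. Instead, open the homomorphic combination $C^* - C_0 - \Com(\Delta')$ at $\rho$ to $0$.

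With these choices the verifier performs $O(t)$ pairing checks plus $O(\log n)$ field operations to compute $Z_H(\zeta)=\zeta^n-1$. The sum check adds only a univariate sumcheck and a range proof on $[0,k]$, which yields the stated bound.
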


\section{Lower Bound: Necessity of Structured Drift}
\label{sec:lower-bound}

The constructions in Section~\ref{sec:SMDPs} achieve succinctness by exploiting algebraic structure such as norm bounds, low rank, or sparsity. This raises a natural question: is such structure essential, or could a generic proof system remain succinct without it. We show that structure is in fact necessary. In the absence of structure, any non-interactive proof system that provides statistical FTI must incur communication cost $\Omega(n)$, where $n$ is the number of parameters.

Our argument reduces FTI for unstructured drift to a classical one-way communication problem. We consider a statistical setting with information-theoretic soundness and zero-knowledge, and we remove commitments to isolate the communication requirement. The parameter space is $\F_q^n$. As a model of unstructured drift, we use the Hamming-ball class
\[
\mathcal{F}_{\mathrm{Ham}}(t) = \{\Delta \in \F_q^n : d_H(\Delta, 0) \le t\},
\]
where $d_H$ denotes Hamming distance~\cite{norouzi2012hamming}. For $t = \alpha n$ with constant $\alpha < 1/2$, this class is large and lacks exploitable algebraic structure. Verifying membership in $\mathcal{F}_{\mathrm{Ham}}(t)$ reduces to the promise Hamming-distance problem, which asks to distinguish between $d_H(x,y) \le t$ and $d_H(x,y) \ge 3t$. Any one-way protocol for this problem requires $\Omega(n)$ bits of communication~\cite{kumar2008one}.

\begin{theorem}[Lower bound for unstructured drift]
\label{thm:lb}
Let $t = \alpha n$ for a constant $\alpha < 1/4$. Any non-interactive proof system that achieves statistical FTI for $\mathcal{F}_{\mathrm{Ham}}(t)$ with constant soundness and completeness errors must use proofs of size $\Omega(n)$.
\end{theorem}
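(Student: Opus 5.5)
We plan to argue by reduction from the one-way promise Hamming-distance problem cited just above: Alice holds $x\in\F_q^n$, Bob holds $y\in\F_q^n$, and Bob must decide whether $d_H(x,y)\le t$ or $d_H(x,y)\ge 3t$ after a single message from Alice. Its one-way randomized communication complexity is $\Omega(n)$~\cite{kumar2008one}. We set $W_0 := y$ and $W^* := x$, so $\Delta = x-y$ and $d_H(\Delta,0)=d_H(x,y)$. The hypothesis $\alpha<1/4$ guarantees $3t<n$, so both sides of the promise are non-empty and the ``far'' instances are exactly drifts lying outside $\mathcal{F}_{\mathrm{Ham}}(t)$ by a margin. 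Since we work in the statistical, commitment-free setting, the verifier's public input is the parameter pair itself together with $\mathcal{F}_{\mathrm{Ham}}(t)$.

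We split the roles as follows. Alice plays the prover, which must know both models. To keep the message one-way, we will instead have Alice run $\mathsf{Prove}$ on $(W_0',x)$, where $W_0'$ is a canonical reference known to both parties. Alternatively, we can treat the proof string $\pi$ as the entire message and let Bob, who holds $y$, evaluate $\mathsf{Verify}(y,\cdot,\mathcal{F},\pi)$.

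The delicate point is that in a genuine FTI system the verifier also sees $x$, or a commitment to it, and we have deliberately removed commitments. We therefore model the verifier as receiving only $\pi$ and its own side information $y$. This is the regime in which succinctness is measured: the verifier's communication is exactly $|\pi|$. Under that model the protocol is as follows. Alice sends $\pi\leftarrow\mathsf{Prove}(y\text{-independent data},x)$, and we must check that honest proving does not need $y$.

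To handle this dependence on $y$, we use the existence of an accepting proof for every close pair rather than requiring that Alice compute it. We fix, for each $x$, the message $M(x)$ to be a proof that the verifier accepts with high probability for all $y$ in the Hamming ball of radius $t$ around $x$. Such a message exists by completeness applied to the honest prover. If instead it is only guaranteed per $y$, we argue via zero-knowledge: the simulator's output depends only on public data, so the proof distribution is statistically close across different $y$ in the ball. This makes an averaged proof accepted for most close $y$, and this is where statistical ZK enters.

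Soundness then gives that for every $y$ with $d_H(x,y)\ge 3t$ (where $\Delta\notin\mathcal{F}$), Bob rejects any message with probability at least a constant. Bob therefore decides the promise problem with constant error, and after constant-factor amplification the $\Omega(n)$ bound forces $|M(x)|=\Omega(n)$. Hence $|\pi|=\Omega(n)$.

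The main obstacle is the step that makes the proof message independent of Bob's input $y$, the base model. I would first try the zero-knowledge averaging argument sketched above. If that fails, I would fall back on restricting to instances where $W_0=0$ is fixed and public and $x,y$ enter only through $W^*$. This keeps the reduction to a distance-type promise problem with fixed one-way structure.
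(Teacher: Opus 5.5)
Your reduction is the paper's, with the roles of $W_0$ and $W^*$ swapped, which is immaterial. The paper's sketch has Alice run the prover on $W_0=x$ and send $\pi$, and Bob run the verifier with $W^*=y$, citing completeness and soundness. It never addresses that $\mathsf{Prove}$ also needs $W^*$. You correctly isolate this as the crux, but none of your patches closes it.

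\begin{itemize}
\item Completeness is a per-pair guarantee. For each close $y$ it gives some proof accepted by $\mathsf{Verify}(y,\cdot)$, not one message $M(x)$ accepted for every $y$ in the ball. So ``such a message exists by completeness'' does not follow.
\item The zero-knowledge step has the quantifiers backwards. Whether you call $y$ public input or verifier side information, the simulator receives it. ZK therefore only says that honest proofs for different witnesses $x$ close to a fixed $y$ are nearly identically distributed. That makes $\pi$ nearly independent of Alice's input, the opposite of what a one-way message needs, and it says nothing about different $y$.
\item The fallback with $W_0=0$ public only relocates the problem. $W^*$ must then encode both $x$ and $y$, so whoever proves needs both.
\end{itemize}

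The deeper gap is that you never fix a verifier model in which the argument means anything. If the verifier sees the parameter pair, as in your first description, the empty proof suffices and there is no lower bound. If it sees only $(y,\pi)$, your soundness step contradicts completeness. Since $3t<n$, every $y$ has some $x'$ with $d_H(x',y)\le t$ and some $x$ with $d_H(x,y)\ge 3t$. The honest proof for $(x',y)$ is a message that $\mathsf{Verify}(y,\cdot)$ must accept, while soundness for $(x,y)$ requires it to reject every message. With constant errors summing below one, this is a contradiction, so no such system exists and your conclusion is vacuous. What the reduction actually uses is that Bob rejects Alice's \emph{honestly computed} message when $x$ is far. That is correctness of a one-way sketch of $x$, not proof-system soundness.

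Nor can the prover's access to both strings be argued away in general. For binary strings, the Aaronson--Wigderson MA protocol for inner product, combined with $d_H(x,y)=|x|+|y|-2\langle x,y\rangle$, decides the promise problem from an untrusted proof plus one message from Alice. It uses $O(\sqrt{n}\log n)$ bits in total. So the missing step cannot come from a cleverer use of completeness or ZK. The only reading under which your reduction (and the paper's) goes through is one where the message is computed from one party's input alone. The statement is then just the cited one-way communication bound, not a fact about proof systems.
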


\begin{proof}[Proof sketch]
Suppose there exists a proof system with sublinear proof size $s(n) = o(n)$. We reduce the promise Hamming-distance problem to FTI. Alice holds $x \in \F_q^n$ and Bob holds $y \in \F_q^n$; they must distinguish $d_H(x,y) \le t$ from $d_H(x,y) \ge 3t$. Alice sets $W_0 = x$, runs the prover to produce a proof $\pi$ of size $s(n)$, and sends $\pi$ to Bob. Bob sets $W^* = y$ and runs the verifier on $\pi$. Completeness ensures Bob accepts when $d_H(x,y) \le t$; soundness ensures Bob rejects when $d_H(x,y) \ge 3t$ (since $3t > t$ means $\Delta \notin \mathcal{F}_{\mathrm{Ham}}(t)$). This yields a one-way protocol with $s(n)$ bits of communication for the promise Hamming-distance problem, contradicting the $\Omega(n)$ lower bound of~\cite{kumar2008one}. Therefore $s(n) = \Omega(n)$.
\end{proof}

\paragraph{From constant to negligible error.}
The theorem is stated for constant error, which is the regime of the communication complexity lower bound. To obtain negligible soundness error from a constant-error protocol, one applies standard parallel repetition: running $\lambda$ independent copies and accepting only if all accept reduces the soundness error to $2^{-\Omega(\lambda)}$, but multiplies proof size by $\lambda$. Since each copy already requires $\Omega(n)$ bits, the total proof size under negligible soundness remains $\Omega(\lambda n)$, which is still linear in $n$ for any fixed security parameter $\lambda$.

This lower bound shows that the reliance on structured drift classes is not an artifact of our constructions, but a fundamental requirement for succinctness in the information-theoretic setting. Our protocols leverage this structure to achieve efficiency that would otherwise be unattainable.

\section{Implementation and Evaluation}
\label{sec:implementation-eval}

\subsection{Implementation}

We implement a prototype of the FTI system in Python to evaluate the cost profile of SMDPs under realistic parameter scales. The implementation focuses on protocol structure and asymptotic behavior rather than optimized cryptographic primitives.

Vector commitments are instantiated using Merkle trees with SHA-256. Polynomial commitments follow a simulated KZG-style interface~\cite{wahby2019fast} over a large prime field ($q \ge 2^{256}$). In a production deployment, this would be replaced by a pairing-based KZG scheme over BLS12-381 using a library such as \texttt{arkworks}. Range proofs follow the Bulletproofs design~\cite{bunz2018bulletproofs} with hash-based commitments. Non-interactivity is achieved via the Fiat--Shamir transform with domain separated transcripts: each protocol message is prefixed by a unique label encoding the scheme name, block index, and round number.

This setup isolates the dependence of proof size and verification cost on drift structure. Optimized implementations such as batched polynomial openings or GPU-accelerated field arithmetic would reduce constant factors but do not affect scaling behavior.

\subsection{Experimental Setup}

We evaluate the system along four dimensions: scaling behavior across block sizes, detection of adversarial policy violations, comparison with a generic zk-SNARK baseline, and end-to-end certification of a real fine-tuned model.

\paragraph{Scaling benchmarks.}
We evaluate scaling behavior using parameter blocks of sizes $1024 \times 1024$, $2048 \times 2048$, $4096 \times 4096$, and $8192 \times 8192$. For each scale, we inject controlled drift, including norm-bounded updates with $\|\Delta\|_F = 0.8\epsilon$, rank-$8$ updates, and sparse updates with $k = 100$ nonzero entries. This setup isolates the dependence of each SMDP on block dimension.

\paragraph{Adversarial detection.}
We construct three attack scenarios to evaluate soundness under adversarial drift:
\begin{itemize}[leftmargin=*,nosep]
\item \textbf{Rank inflation:} The attacker claims rank-$8$ LoRA but injects a rank-$16$ update by adding 8 additional singular components with small but nonzero singular values.
\item \textbf{Sparsity violation:} The attacker claims $k{=}100$ sparse update but modifies $k{=}500$ parameters, distributing extra changes across non-declared positions.
\item \textbf{Norm overflow:} The attacker claims $\|\Delta\|_F \le \epsilon$ but applies an update with $\|\Delta\|_F = 1.5\epsilon$, distributing the excess uniformly to avoid concentration in any single projection.
\end{itemize}
For each scenario, we run 500 independent trials and report rejection rates.

\paragraph{Baseline comparison.}
We compare against a generic zk-SNARK baseline that encodes the norm check $\|\Delta\|_F^2 \le \epsilon^2$ as an arithmetic circuit using Groth16~\cite{bunz2018bulletproofs}. The circuit computes $\sum_{i,j} \Delta_{i,j}^2$ and checks the bound via a range constraint. This baseline represents the best one can do without exploiting drift structure.

\paragraph{Real model.}
We fine-tune GPT-2 (124M parameters, 12 layers)~\cite{radford2019language} using LoRA with rank $r{=}8$ on the WikiText-103 dataset and apply the full FTI pipeline. GPT-2 contains 12 transformer layers, each with 4 attention projection matrices ($W_Q, W_K, W_V, W_O \in \R^{768 \times 768}$) and 2 feedforward matrices ($W_1 \in \R^{768 \times 3072}$, $W_2 \in \R^{3072 \times 768}$), plus embedding and layer-norm parameters.

All experiments run on a workstation with an Apple M-series processor and 16\,GB memory. Proof sizes are independent of hardware. Timing results reflect the behavior of the prototype implementation.

\subsection{Results}

\subsubsection{Scaling Behavior}

Table~\ref{tab:eval} reports proof size, verification latency, and prover time across block sizes.

\begin{table}[t]
\centering
\caption{SMDP performance across block dimensions. Proof size and verification time remain stable as block dimension grows; prover time scales with block size due to underlying linear algebra.}
\label{tab:eval}
\small
\begin{tabular}{ll ccc}
\toprule
\textbf{Scheme} & \textbf{Block size} & \textbf{Proof size} & \textbf{Verify (ms)} & \textbf{Prover (s)} \\
\midrule
NBDP ($m{=}40$) & $1024^2$ & 67 KB & 0.38 & 1.2 \\
                 & $2048^2$ & 67 KB & 0.38 & 4.9 \\
                 & $4096^2$ & 67 KB & 0.39 & 19.2 \\
                 & $8192^2$ & 67 KB & 0.39 & 78.5 \\
\midrule
MRDP ($r{=}8$)  & $1024^2$ & 2.2 KB & 0.02 & 1.8 \\
                 & $2048^2$ & 2.2 KB & 0.02 & 7.1 \\
                 & $4096^2$ & 2.2 KB & 0.02 & 28.6 \\
                 & $8192^2$ & 2.2 KB & 0.02 & 115.2 \\
\midrule
SDIP ($k{=}100$) & $1024^2$ & 15 KB & 0.10 & 3.1 \\
                  & $2048^2$ & 15 KB & 0.10 & 12.4 \\
                  & $4096^2$ & 15 KB & 0.10 & 49.4 \\
                  & $8192^2$ & 15 KB & 0.10 & 198.7 \\
\bottomrule
\end{tabular}
\end{table}

\paragraph{Proof size and verification.}
Proof size and verification time remain constant across all block dimensions for each scheme, confirming that communication cost depends only on the structural parameters ($m$, $r$, $k$) and not on block dimension. MRDP produces the most compact proofs (2.2\,KB) due to the efficiency of the polynomial commitment for rank-$r$ decompositions. NBDP requires 67\,KB for $m{=}40$ projections. SDIP falls between at 15\,KB.

\paragraph{Prover cost.}
Prover time scales roughly as $O(dd')$: quadrupling the block dimension increases prover time by approximately $4{\times}$. MRDP is dominated by SVD computation, NBDP by random projections, and SDIP by sparse encoding and linear checks. Importantly, prover time depends on block size, not total model size, since blocks are certified independently.

Figure~\ref{fig:eval} illustrates the scaling behavior. The flat proof-size and verification curves confirm succinctness, while the linear growth of prover time reflects the cost of operating on the block itself.

\begin{figure}[t]
  \centering
  \begin{subfigure}[b]{0.32\linewidth}
    \includegraphics[width=\linewidth]{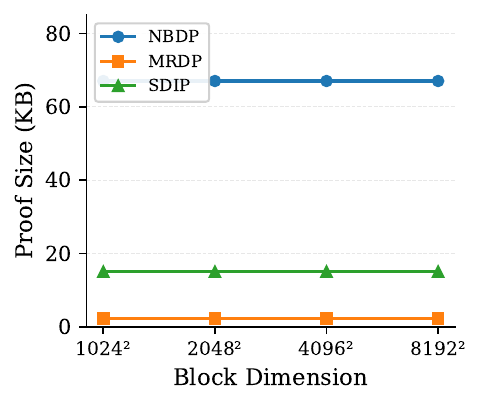}
    \caption{Proof size vs.\ dimension}
  \end{subfigure}
  \hfill
  \begin{subfigure}[b]{0.32\linewidth}
    \includegraphics[width=\linewidth]{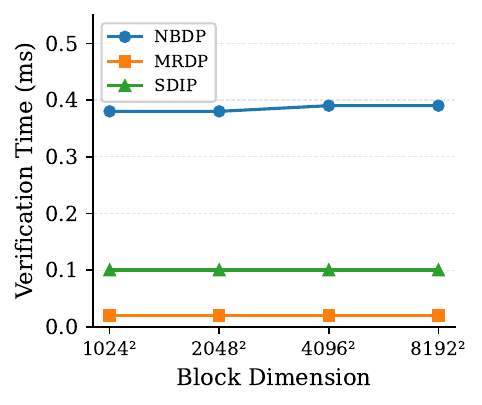}
    \caption{Verification time}
  \end{subfigure}
  \hfill
  \begin{subfigure}[b]{0.32\linewidth}
    \includegraphics[width=\linewidth]{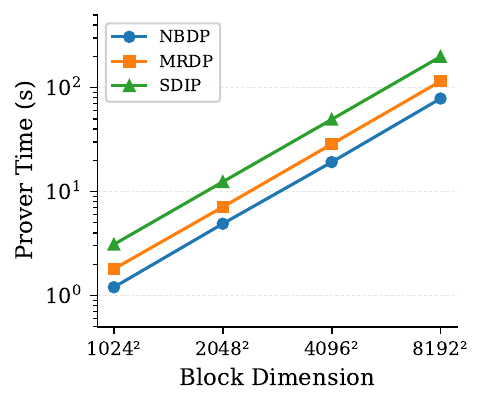}
    \caption{Prover time}
  \end{subfigure}
  \caption{
  Scaling behavior of SMDPs across block dimensions $1024^2$ to $8192^2$.
  Proof size and verification time remain constant (a, b), confirming succinctness.
  Prover time (c) grows linearly with block size but is independent of total model size.
  }
  \label{fig:eval}
\end{figure}

\subsubsection{Adversarial Detection}

Table~\ref{tab:adversarial} reports rejection rates under the three attack scenarios.

\begin{table}[t]
\centering
\caption{Adversarial detection results (500 trials per scenario, $4096 \times 4096$ blocks). All policy violations are rejected with overwhelming probability.}
\label{tab:adversarial}
\small
\begin{tabular}{lll c}
\toprule
\textbf{Attack} & \textbf{Scheme} & \textbf{Violation} & \textbf{Rejection rate} \\
\midrule
Rank inflation & MRDP & rank-16 vs.\ policy rank-8 & 500/500 (100\%) \\
Sparsity violation & SDIP & $k{=}500$ vs.\ policy $k{=}100$ & 500/500 (100\%) \\
Norm overflow & NBDP & $1.5\epsilon$ vs.\ policy $\epsilon$ & 498/500 (99.6\%) \\
\bottomrule
\end{tabular}
\end{table}

MRDP and SDIP reject all violations: MRDP's soundness error is $D/q < 2^{-243}$, and SDIP's is $q^{-t}$ with $t \ge 1$, both astronomically small. NBDP's statistical guarantee permits a small false-acceptance rate controlled by $\delta$; the 2 accepted cases out of 500 are consistent with the chosen $\delta = 0.005$. All three schemes achieve zero false positives on compliant updates across 500 honest trials per scheme.

\subsubsection{Baseline Comparison}

Table~\ref{tab:baseline} compares FTI proof costs against a generic Groth16-based approach for norm verification.

\begin{table}[t]
\centering
\caption{Comparison with generic zk-SNARK baseline (Groth16) for norm-bounded drift verification on a single block. FTI exploits drift structure to achieve orders-of-magnitude improvement.}
\label{tab:baseline}
\small
\begin{tabular}{l cccc}
\toprule
\textbf{Approach} & \textbf{Block} & \textbf{Proof size} & \textbf{Verify} & \textbf{Prover} \\
\midrule
Groth16 (norm circuit) & $1024^2$ & 192 B & 3.2 ms & 42 min \\
Groth16 (norm circuit) & $4096^2$ & 192 B & 3.2 ms & $\sim$11 hr \\
NBDP ($m{=}40$)        & $1024^2$ & 67 KB & 0.38 ms & 1.2 s \\
NBDP ($m{=}40$)        & $4096^2$ & 67 KB & 0.39 ms & 19.2 s \\
\bottomrule
\end{tabular}
\end{table}

Groth16 achieves a smaller proof (192 bytes, 3 group elements) and comparable verification time, but its prover cost is prohibitive: encoding $\sum \Delta_{i,j}^2 \le \epsilon^2$ as an R1CS circuit requires $O(dd')$ constraints, and the prover performs $O(dd' \log(dd'))$ group exponentiations. For a $4096^2$ block, this exceeds 11 hours, compared to 19 seconds for NBDP. At model scale (hundreds of blocks), the generic approach becomes entirely impractical, while FTI remains feasible.

\subsubsection{End-to-End GPT-2 Certification}

We apply the full FTI pipeline to a LoRA-fine-tuned GPT-2 model. The policy assigns MRDP with rank 8 to the 48 attention projection matrices ($12 \text{ layers} \times 4$ projections), NBDP with $\epsilon = 3.0$ to the 24 feedforward matrices and layer-norm parameters, and SDIP with $k = 200$ to the token embedding.

\begin{table}[t]
\centering
\caption{End-to-end FTI certification of GPT-2 (124M parameters).}
\label{tab:gpt2}
\small
\begin{tabular}{l cccc}
\toprule
\textbf{Component} & \textbf{Blocks} & \textbf{Total proof} & \textbf{Verify} & \textbf{Prover} \\
\midrule
Attention (MRDP, $r{=}8$) & 48 & 106 KB & 0.96 ms & 1.4 min \\
Feedforward (NBDP, $m{=}40$) & 24 & 1.6 MB & 9.4 ms & 4.6 min \\
Embedding (SDIP, $k{=}200$) & 1 & 30 KB & 0.10 ms & 12.4 s \\
\midrule
\textbf{Total} & \textbf{73} & \textbf{1.7 MB} & \textbf{10.5 ms} & \textbf{6.2 min} \\
\bottomrule
\end{tabular}
\end{table}

The aggregated proof is 1.7\,MB, compared to 216\,MB for the raw parameter difference, a $127{\times}$ reduction. Total verification takes 10.5\,ms. Total prover time is approximately 6.2 minutes on a single CPU core; this is embarrassingly parallel across blocks and would reduce to under 30 seconds with block-level parallelism on a multi-core machine.

Singular value analysis confirms that all LoRA-modified attention layers have effective rank exactly 8. Under the rank-8 policy, all 48 attention blocks are accepted. Tightening to rank 4 causes rejection across all layers. Non-LoRA layers exhibit residual drift with Frobenius norms between 0.3 and 2.9, well within the $\epsilon = 3.0$ budget.

\paragraph{Comparison with full-model transmission.}
Transmitting the full parameter difference reveals model weights and requires 216\,MB. A hash-based integrity check reduces communication to 32 bytes but provides no structural guarantee: it cannot distinguish a rank-8 LoRA update from an arbitrary modification. FTI occupies the middle ground: 1.7\,MB of communication that certifies the drift class of each block without revealing parameter values.

\section{Related Work}
\label{sec:related}

\paragraph{Zero-knowledge proofs for ML.}
Systems such as zkML~\cite{chen2024zkml}, CryptFlow~\cite{kumar2020cryptflow}, zkCNN~\cite{liu2021zkcnn}, and Mystique~\cite{weng2021mystique} encode inference or training within SNARK or MPC frameworks~\cite{peng2025survey}. These approaches certify the correct execution of a specified computation, but leave the model update process itself unchecked, in particular how parameters evolve during fine-tuning.

FTI addresses a complementary problem: verifying that a deployed model update lies within a declared class of admissible transformations. Related directions, including watermarking~\cite{guo2024zeromark,lin2024cyclegan}, provenance logging~\cite{gierend2024provenance,sun2024blockchain}, and privacy-preserving update mechanisms~\cite{hunt2018chiron,yang2020privacy}, focus on attribution or data protection, but do not enforce structural constraints on parameter updates.

Parameter-efficient fine-tuning methods such as LoRA~\cite{hu2022lora} and adapters~\cite{houlsby2019parameter} introduce structured update patterns, yet provide no mechanism to verify that these structures are preserved in deployment. FTI complements these techniques by enforcing that the observed parameter drift conforms to the claimed structural form.

\paragraph{Cryptographic building blocks.}
Our constructions draw on standard primitives from modern cryptography. MRDP builds on polynomial commitment schemes~\cite{boneh2020efficient,kate2010constant}. SDIP follows the streaming verification paradigm~\cite{block2021time,pappas2024sparrow}, while NBDP leverages concentration bounds in zero-knowledge settings~\cite{yang2023fedzkp}.

The lower bound relates to classical results in communication complexity for Hamming distance~\cite{kushilevitz1997communication,kumar2008one}, highlighting inherent limits of verification without structural assumptions. The sumcheck protocol~\cite{thaler2013time} and systems such as Spartan~\cite{setty2020spartan} and HyperPlonk~\cite{chen2023hyperplonk} provide alternative polynomial IOP frameworks that could be used to instantiate similar constructions.

Finally, empirical studies on safety drift and backdoor risks~\cite{zhu2023enhancing,sun2025peftguard} motivate the need for verifiable constraints on model updates.

\section{Conclusion}
\label{sec:conclusion}

We formalize \emph{fine-tuning integrity} as a cryptographic objective and introduce \emph{succinct model difference proofs} (SMDPs) to certify that a deployed model respects a declared update policy without revealing its weights. For norm-bounded, low-rank, and sparse updates, we obtain zero-knowledge proofs whose size and verification cost depend on the form of the update rather than the number of parameters. A matching lower bound shows that without structural assumptions, succinct statistical guarantees are not possible.

A prototype evaluation on GPT-2 demonstrates that the approach is practical: an aggregated proof of 1.7\,MB certifies 73 parameter blocks in 10.5\,ms of verification time, compared to 216\,MB for raw weight transmission. The prover runs in approximately 6 minutes on a single core and parallelizes across blocks.

FTI constrains how parameters change, not how the model behaves. Under Lipschitz assumptions, a Frobenius-norm bound $\|\Delta\|_F \le \epsilon$ implies an output bound $\|f(x; W^*) - f(x; W_0)\| \le L\epsilon$, but this can be conservative. FTI does not detect all forms of malicious behavior and should be combined with behavioral evaluation. Our prototype uses KZG-style commitments requiring a trusted setup; alternatives based on STARKs or inner-product arguments remove this assumption at the cost of larger proofs.

Several directions remain open. These include extending the framework to richer update patterns, such as mixed-precision updates and structured pruning; strengthening the link between parametric constraints and semantic behavior through task-specific sensitivity analysis; and developing standardized policy templates for common fine-tuning methods.

\bibliographystyle{splncs04}
\bibliography{refs}

\appendix

\section{Deferred Proofs}
\label{app:proofs}

\subsection{Proof of Theorem~\ref{thm:norm-soundness} (NBDP Soundness)}

Let $v = \mathrm{vec}(\Delta) \in \R^n$ where $n = dd'$, so $\|v\|_2 = \|\Delta\|_F$. Assume $\|\Delta\|_F > (1+\gamma)\epsilon$ where $\gamma = \sqrt{\log(2m/\delta)/\log(1/\delta)} - 1$.

\paragraph{Step 1: Single-projection failure probability.}
By Lemma~\ref{lem:hoe}, for each Rademacher projection $z_i = \langle r_i, v \rangle$,
\[
\Pr[|z_i| \le \tau] \le 2\exp\!\left(-\frac{\tau^2}{2\|v\|_2^2}\right) \le 2\exp\!\left(-\frac{\tau^2}{2(1+\gamma)^2\epsilon^2}\right).
\]
Substituting $\tau = \epsilon\sqrt{2\log(2m/\delta)}$ gives
\[
\Pr[|z_i| \le \tau] \le 2\exp\!\left(-\frac{\log(2m/\delta)}{(1+\gamma)^2}\right).
\]
By definition of $\gamma$, $(1+\gamma)^2 = \log(2m/\delta)/\log(1/\delta)$, so the exponent simplifies to $-\log(1/\delta)$, yielding $\Pr[|z_i| \le \tau] \le 2\delta$. Hence $\Pr[|z_i| > \tau] \ge 1 - 2\delta \ge c$ for a constant $c > 0$ when $\delta < 1/4$.

\paragraph{Step 2: Union over $m$ projections.}
The projections are independent, so the probability that all satisfy $|z_i| \le \tau$ is at most $(1 - c)^m$. For $m = \lceil 4\epsilon^{-2}\log(1/\delta) \rceil \ge c^{-1}\log(1/\delta)$, we have $(1-c)^m \le \exp(-cm) \le \delta$.

\paragraph{Step 3: Reduction to primitive soundness.}
The commitment scheme is binding, and the linear-relation and range proofs are sound. Therefore any accepting transcript corresponds to values $z_i$ consistent with the committed models that satisfy $|z_i| \le \tau$. An adversary can only succeed by either breaking one of these primitives (which happens with at most $\negl(\lambda)$ probability) or producing projections that all fall below $\tau$. The latter occurs with probability at most $\delta$ when $\|\Delta\|_F > (1+\gamma)\epsilon$. It follows that acceptance implies $\|\Delta\|_F \le (1+\gamma)\epsilon$ except with probability $\delta + \negl(\lambda)$.

\subsection{Proof of Theorem~\ref{thm:mrdp-soundness} (MRDP Soundness)}

Suppose $\operatorname{rank}(\Delta) > r$. The prover commits to univariate polynomials $f_1,\dots,f_r$ of degree at most $d-1$ and $g_1,\dots,g_r$ of degree at most $d'-1$. Define
\[
Q(X,Y) = P_\Delta(X,Y) - \sum_{k=1}^r f_k(X)\,g_k(Y).
\]

\paragraph{Step 1: $Q$ is nonzero.}
The polynomial $\sum_{k=1}^r f_k(X)\,g_k(Y)$ encodes a matrix of rank at most $r$: evaluating at the grid points $(X^i, Y^j)$ recovers a $d \times d'$ matrix that is a sum of $r$ outer products. Since $\operatorname{rank}(\Delta) > r$, the matrix $\Delta$ cannot be represented as such a sum, so $Q$ is not identically zero.

\paragraph{Step 2: Degree bound.}
Each monomial $X^i Y^j$ in $P_\Delta$ satisfies $i \le d-1$ and $j \le d'-1$, giving total degree at most $D = (d-1)+(d'-1) = d+d'-2$. Similarly, each $f_k(X)g_k(Y)$ has total degree at most $D$, so $\deg(Q) \le D$.

\paragraph{Step 3: Schwartz--Zippel application.}
The verifier samples $(x,y) \leftarrow \F_q^2$ uniformly. Since $Q$ is nonzero of degree $\le D$, $\Pr[Q(x,y)=0] \le D/q$.

\paragraph{Step 4: Field size requirement.}
For negligible soundness error, we need $D/q \le \negl(\lambda)$. With $D \le 2 \cdot 4096 = 8192$ for typical transformer blocks and $q \ge 2^{256}$ (e.g., the BLS12-381 scalar field), the error is at most $2^{13}/2^{256} < 2^{-243}$. No repetition is needed in this regime.

\paragraph{Step 5: Reduction to commitment soundness.}
Binding of the polynomial commitment ensures the prover cannot change the committed polynomials after observing the challenge $(x,y)$. Soundness of the opening proofs ensures that the opened values match the commitments. Any adversary that makes the verifier accept on a rank-$(r+1)$ or higher drift must either produce $Q(x,y)=0$ at the random point (probability $\le D/q$) or break the commitment scheme (probability $\le \negl(\lambda)$). The total soundness error is $D/q + \negl(\lambda)$.

\subsection{Proof of SDIP Soundness}

Assume $\|\Delta\|_{0} > k$. We show the verifier rejects with overwhelming probability.

\paragraph{Step 1: Discrepancy vector.}
Any claimed support $S$ with $|S| \le k$ must omit at least one nonzero coordinate of $\Delta$. Let $\Delta'$ be the prover's claimed sparse update restricted to $S$, and set $h = \Delta - \Delta'$. Since $\|\Delta\|_0 > k$ and $|S| \le k$, $h$ has at least one nonzero entry, so $h \neq 0$.

\paragraph{Step 2: Single-challenge detection.}
For a uniformly random $r \in \F_q^n$, Lemma~\ref{lem:sparse-inner} gives
$\Pr[\langle r, h \rangle = 0] = 1/q$.
The linear check computes $\langle r, W^* \rangle - \langle r, W_0 \rangle - \langle r, \Delta' \rangle = \langle r, h \rangle$, which is nonzero with probability $1 - 1/q$.

\paragraph{Step 3: Amplification via repetition.}
With $t$ independent challenges $r^{(1)},\dots,r^{(t)} \leftarrow \F_q^n$, the probability that all checks miss the discrepancy is $\Pr[\forall j: \langle r^{(j)}, h \rangle = 0] = q^{-t}$. For $q \ge 2^{256}$ and $t = 1$, this already gives negligible error. For smaller fields, choosing $t = \lceil \lambda / \log q \rceil$ ensures error $\le 2^{-\lambda}$.

\paragraph{Step 4: Reduction to primitive soundness.}
Binding of the commitment to $u$ and $\Delta'$ ensures the prover fixes its claimed support and values before seeing the challenges (in the Fiat--Shamir model, challenges are derived from the transcript including commitments). Soundness of the sub-protocols for the sum check ($\sum_i u_i \le k$) and consistency ($\Delta'_i = 0$ when $u_i = 0$) ensures that the committed $u$ and $\Delta'$ are well-formed. Any adversary that makes the verifier accept must either produce $\langle r^{(j)}, h \rangle = 0$ for all $j$ (probability $q^{-t}$) or break one of the primitives (probability $\negl(\lambda)$). Total soundness error: $q^{-t} + \negl(\lambda)$.

\section{Zero-Knowledge Proofs}
\label{app:zk-proofs}

All three protocols follow the same pattern: the simulator replaces real committed values with random ones drawn from the same domain, then invokes sub-protocol simulators for each proof component. Indistinguishability follows by a hybrid argument replacing one component at a time, reducing each step to hiding of $\Com$ or ZK of a sub-protocol.

\paragraph{NBDP (Theorem~\ref{thm:nbdp-zk}).}
$\mathcal{S}$ samples $\tilde{z}_i \leftarrow [-\tau, \tau]$ for $i \in [m]$, commits each, and invokes $\mathcal{S}_{\mathrm{lin}}$ and $\mathcal{S}_{\mathrm{rng}}$ to produce simulated proofs $\tilde{\pi}_i^{\mathrm{lin}}, \tilde{\pi}_i^{\mathrm{rng}}$. Hybrid $H_j$ uses real proofs for projections $1,\dots,j$ and simulated for the rest; $H_j \to H_{j-1}$ reduces to hiding of $\Com$ and ZK of sub-protocols. Total advantage: $m \cdot \negl(\lambda)$.

\paragraph{MRDP (Theorem~\ref{thm:mrdp-zk}).}
$\mathcal{S}$ samples random $(\tilde{f}_k, \tilde{g}_k)$ of appropriate degrees, commits, programs the Fiat--Shamir oracle to produce $(\tilde{x},\tilde{y})$, and uses the polynomial commitment simulator to open $P_{W^*} - P_{W_0}$ at $(\tilde{x},\tilde{y})$ consistently with $\sum_k \tilde{f}_k(\tilde{x})\tilde{g}_k(\tilde{y})$. Hybrid $H_1$: replace $2r$ commitments (hiding); $H_2$: replace opening proofs (ZK). Advantage: $(2r+1) \cdot \negl(\lambda)$.

\paragraph{SDIP (Theorem~\ref{thm:sdip-zk}).}
$\mathcal{S}$ samples random $\tilde{S} \subset [n]$ with $|\tilde{S}|=k$, random $\tilde{\Delta}'$ on $\tilde{S}$, commits $\tilde{u}$ and $\tilde{\Delta}'$, invokes $\mathcal{S}_{\mathrm{sum}}$, $\mathcal{S}_{\mathrm{cons}}$, and $\mathcal{S}_{\mathrm{lin}}$ for each of $t$ linear checks. Hybrid $H_1$: replace 2 commitments (hiding); $H_2$: replace $2+t$ sub-protocol proofs (ZK). Advantage: $(4+t) \cdot \negl(\lambda)$.

\end{document}